\documentclass[11pt,reqno]{amsart}
\usepackage{amsfonts,amssymb,amsmath,amsopn,amsthm,graphicx}
\usepackage{amsxtra, mathrsfs}
\usepackage{colordvi}
\usepackage[usenames,dvipsnames]{color}
\usepackage{amsfonts,amssymb,amsbsy,amsmath,amsthm,dsfont}
\usepackage{mathtools}
\usepackage[colorlinks=true]{hyperref}
\usepackage{wasysym}
\usepackage{enumitem}

\usepackage[parfill]{parskip} 

\usepackage[parfill]{parskip} 

\numberwithin{equation}{section}

\newtheorem{theorem}{Theorem}[section]
\newtheorem{corollary}[theorem]{Corollary}
\newtheorem{lemma}[theorem]{Lemma}
\newtheorem{proposition}[theorem]{Proposition}

\theoremstyle{definition}

\newtheorem{remark}[theorem]{Remark}

\newtheorem{assumption}[theorem]{Assumption}

\newcommand{\ball}{\mathscr{B}}

\renewcommand{\epsilon}{\varepsilon}

\newcommand{\loc}{{\rm loc}}

\renewcommand{\phi}{\varphi}
\newcommand{\R}{\mathbb{R}}

\newcommand{\Sph}{\mathbb{S}}

\newcommand{\w}{\mathrm{w}}
\newcommand{\Z}{\mathbb{Z}}
\newcommand{\eps}{\epsilon}

\newcommand{\V}{\mathcal{V}}

\newcommand{\mynegspace}{\hspace{-0.12em}}
\newcommand{\lvvvert}{\rvert\mynegspace\rvert\mynegspace\rvert}
\newcommand{\rvvvert}{\rvert\mynegspace\rvert\mynegspace\rvert}
\DeclarePairedDelimiter{\vvvert}{\lvvvert}{\rvvvert}

\newcommand{\displaystyleint}{\displaystyle\int}

\author[H. Kova\v{r}\'{\i}k]{Hynek Kova\v{r}\'{\i}k}
\author[P. C. Rossaro]{Pier Cristoforo Rossaro}

\address[Hynek Kova\v{r}\'{\i}k]{
DICATAM, Sezione di Matematica
Universit\'a degli studi di Brescia
Via Branze, 38 - 25123
Brescia, ITALY}
\email{\href{mailto:hynek.kovarik@unibs.it}{hynek.kovarik@unibs.it}}

\address[Pier Cristoforo Rossaro]{
DICATAM, Sezione di Matematica
Universit\'a degli studi di Brescia
Via Branze, 38 - 25123
Brescia, ITALY}
\email{\href{mailto:p.rossaro@unibs.it}{p.rossaro@unibs.it}}

\keywords{Magnetic Schr\"odinger operator, Hardy Inequality}

\makeatletter
\@namedef{subjclassname@2020}{%
  \textup{2020} Mathematics Subject Classification}
\makeatother

\subjclass[2020]{Primary: 35R45; Secondary: 47A63}

\begin{document}
\title[Magnetic Hardy  inequalities with singular integral weights]{Magnetic Hardy  inequalities with singular integral weights}

\maketitle

\begin{abstract}
In this  paper we present  Hardy type inequalities for magnetic Dirichlet forms with singular integral weights. We analyze the local 
and global optimality of the integral weight and discuss several examples in details. An application of our results to 
spectral estimates for magnetic Schr\"odinger operators is provided as well. 
\end{abstract}

\section{\bf Introduction}

\subsection{\bf The background}
The classical Hardy inequality in $\R^n$ states that 
\begin{equation} \label{clas-hardy} 
\int _{\R^n}  |\nabla u|^2 \, dx  \, \geq \, \frac{(n-2)^2}{4} \int_{\R^n}  \frac{ |u|^2}{|x|^2}\, dx ,\qquad n\geq 3
\end{equation} 
for all $u\in H^1(\R^n)$. Inequality \eqref{clas-hardy} is a multidimensional version of the original 
one-dimensional bound proved by Hardy in 1920, \cite{hardy}. 
It is well known that the constant on the right hand side of \eqref{clas-hardy} 
is sharp and that the inequality fails, for any positive integral weight, in dimension one and two. For further reading we refer to the monographs \cite{HLP, ok} and references therein.

Hardy inequality for magnetic Dirichlet forms has much shorter history than its non-magnetic counterpart \eqref{clas-hardy}. Nevertheless,  starting from the pioneering 
work by Laptev and Weidl \cite{lw-1999} it has produced a considerable amount of literature, see e.g.~\cite{w-cpde-1999, w-jlms-1999, bls, cfkp-2021, ck, ek, ep, et, fk-2024, fklv} and the monograph \cite{bel}. The essential contribution of  \cite{lw-1999} is the proof of the fact that introducing a magnetic field $B:\R^2\to \R$, which means replacing $\nabla$ in \eqref{clas-hardy} by a magnetic gradient $i\nabla +A$ with $A:\R^2\to\R^2$ such that $\nabla\times A =B$, leads to a Hardy type inequality even in dimension two; 
\begin{equation} \label{generic} 
\int_{\R^2} |(i\nabla+A)u|^2 \, dx \, \geq \, \int_{\R^2} \rho\, |u|^2 \, dx\, .
\end{equation}
Such a bound holds under suitable regularity conditions on $B$ and 
for test functions $u$ in the corresponding form domain. By \cite{lw-1999, w-cpde-1999}, 
 we have  $ \rho\ \gneqq 0$ as soon as $B\neq 0$. It should be pointed out however that while generically
$ \rho(x) \sim\, |x|^{-2}$ at infinity, in a neighbourhood of zero one has  $ \rho(x) \sim |x|^{-2}$ only in the special case of the so-called 
Aharonov-Bohm magnetic field which is concentrated in one point. In fact, in that case we have
\begin{equation} \label{ab}
 \rho(x) = \frac{\min_{k\in\Z} |\mu-k|^2}{|x|^2} \qquad \text{if} \qquad A(x) = A_{ab}(x)= \mu\, \frac{(-x_2, x_1)}{|x|^2} \, , \\[3pt]
\end{equation} 
where the constant $\min_{k\in\Z} |\mu-k|^2$ is sharp, see \cite{lw-1999}. On the other hand, if the magnetic field is sufficiently regular so that $|A|\in L^q_{\rm loc}(\R^2)$ with 
$q>2$, then $\|(i\nabla +A) u\|_2^2 +\|u\|_2^2$ is {\it locally} equivalent to $\|u\|_{H^1}^2$, cf.~Theorem \ref{thm-weidl}. Hence the domain of the quadratic form in \eqref{generic} includes also functions which are constant in the vicinity of zero, and therefore the weight function  $ \rho$ cannot be as singular as in \eqref{clas-hardy} and \eqref{ab}. To see this observe that  the function  $|\cdot|^{-2}$ is not locally integrable in $\R^2$.

Yet the validity of \eqref{generic} for a weight function $\rho$ with an integrable singularity in the origin cannot be excluded  a priori. Nonetheless, in all the above mentioned papers concerning dimension two, with the exception of \cite{cfkp-2021}, the integral weight $ \rho$ is {\it locally bounded}. 

The main object of interest of  \cite{cfkp-2021} are magnetic operators 
on  Heisenberg group. However, in \cite[Rem.~6.6]{cfkp-2021} the authors observe that their method can be applied also to magnetic Schr\"odinger operators to show that if $B$ is smooth, compactly supported and such that $\frac{1}{2\pi} \int_{\R^2} B \not\in \Z$, then inequality \eqref{generic} holds with
\begin{equation} \label{cfkp}
 \rho(x) \, = \frac{c(B,r_0)}{r^2 \big[1 + ( \log_+ r_0/r)^2\big] }\, ,  \qquad r = |x|. 
\end{equation} 
Here $r_0$ is large enough so that supp$B\subset \{x\in\R^2 : \ |x| < r_0\}$, and $\log_+$ denotes the positive part of  the logarithm. The constant in \eqref{cfkp} depends only on the magnetic field and on the choice of $r_0$.

In this paper, inspired by \cite{cfkp-2021}, we address the following questions. First, what is the strongest possible singularity of $ \rho$ in the Hardy inequality \eqref{generic} 
generated by a  regular magnetic field? Here and below  we call a magnetic field $B:\R^2\to\R$  {\it regular} if 
\begin{equation} \label{ass-regular}
B \in L^p_{\rm\, loc} (\R^2) \quad \text{for some}\quad p>1. 
\end{equation}
Second, what is the relation between the local behavior of $\rho$ and $B$ in a situation in which the field is not regular but still belongs to $L^1_{\rm loc}(\R^2)$? Recall that 
that the Aharonov-Bohm magnetic field generated by the vector $A_{ab}$ given in \eqref{ab} is not locally integrable since in that case we have $B(x)=0$ for any $x\neq 0$, while $\int_{\R^2}B\neq 0$.

As for the first question, it turns out that the weight \eqref{cfkp} found in \cite{cfkp-2021} is the best possible, at least as far as the logarithmic scale is concerned, for {\it all regular} magnetic fields, see Theorem \ref{thm-hardy} and Remark \ref{rem-1}. Moreover, the decay rate of $\rho_0$ is optimal also at infinity if no further restrictions on $B$ are introduced, see Remark \ref{rem-2} and Proposition \ref{prop-log-global}.

To answer the second question we consider, for definiteness, magnetic fields $B\in L^1_{\rm loc}(\R^2)$ with a single isolated singularity in the origin. The behavior of $\rho$ in a neighborhood of zero is then characterized 
 in terms of the (normalized) flux of $B$ through the ball of radius $r$ centered in the origin: 
\begin{equation} \label{alpfa-r} 
\alpha(r) = \frac{1}{2\pi} \int_{|x|<r} B(x) \, dx .
\end{equation}
Accordingly we define 
\begin{equation} \label{Phi} 
\Phi(r) = \frac{\alpha(r)}{r}\, .
\end{equation}
Our main result concerning singular fields then implies that if
the singularity of $B$ is strong enough, then  $\rho(x) \sim\Phi(r)^2$ as $r\to 0$, see Theorem \ref{thm-singular-local} for more details. We also characterize the domain on which the quadratic form $\|(i\nabla+A)u\|_2^2$ is closed, and show how it depends on the asymptotic behavior of the magnetic field around zero. This is done in Corollary \ref{cor-domain}.

\smallskip

Before stating our main results let us fix some necessary notation.

\subsubsection*{\bf Notation}
Let $X\subseteq \R^2$ be an arbitrary set and $f,g:X \to \mathbb{R}$. In the following, we write 
$$
f(x) \lesssim  g(x)
$$ 
if
there exists a constant $c>0$,   such that $f(x) \leq c \, g(x)$ for all $x\in X$. The symbol $\gtrsim$ is defined  accordingly,. We also write $f(x) \asymp g(x) $ if $f(x) \gtrsim g(x)$ and $f(x) \lesssim g(x)$. 

By $\ball_R=\{x\in\R^2: \, |x|< R\}$ we denote the ball of radius $R$ centered at the origin.

\subsection{\bf Main results} Throughout the paper we denote by
 \begin{equation} \label{eq-form}
Q_A[u] = \int_{\R^2} |(i\nabla+A)u|^2 \, dx 
\end{equation} 
the quadratic form associated to $A$. The associated  graph norm is given by
\begin{equation} \label{vvv}
\vvvert{u}_A = \big( Q_A[u] + \|u\|_2^2 \big)^{1/2}\, . 
\end{equation}
Let 
\begin{equation}
d_0(Q_A) =  \overline{C_0^\infty(\R^2)}^{\, \vvvert{ \cdot}_A}\, .
\end{equation}

 Our first result shows that the singularity in \eqref{cfkp} is generic for all  regular magnetic fields.

\begin{theorem}[\bf Hardy inequality for regular magnetic fields] \label{thm-hardy} 
Let $B\neq 0$ satisfy  \eqref{ass-regular}, and let $A:\R^2\to\R^2$ be a vector field such that $\nabla \times A =B$. Then there exists
a constant $C_B$ such that for all $u\in d_0(Q_A)$ we have 
\begin{equation} \label{hardy-eq}
Q_A[u]  \, \ge\,  C_B \int_{\R^2} \rho_0\, |u|^2 dx
\end{equation}
where
\begin{equation} \label{weight-rho}
\rho_0(x) = \frac{1}{r^2 (1 + \log^2 r)}\, .
\end{equation}
\end{theorem}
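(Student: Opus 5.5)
Since $B\neq 0$ and $B\in L^1_{\rm loc}$, I would pick a point $x_0$ and radius so that the flux through some annulus is non-integer. Gauge invariance (shifting and the change $u\mapsto e^{i\chi}u$, which preserves $Q_A$ and $d_0(Q_A)$) lets me assume $x_0=0$, and by density it is enough to work with $u\in C_0^\infty(\R^2)$. The idea is then to combine three estimates: a local bound near the support of the field, an angular Hardy bound in each annulus away from it, and a logarithmic one–dimensional Hardy inequality that passes from a positive potential on an annulus to the weight $\rho_0$ at zero and at infinity.

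\emph{Step 1 (a positive local potential).} Since $B\neq0$, I choose $0<r_1<r_2$ such that $\alpha(r)\notin\Z$ for $r$ in a set of positive measure in $(r_1,r_2)$, using the continuity of $\alpha$ and the fact that $B\neq 0$. In polar coordinates, working in the transversal (Poincaré) gauge $A=\frac{\alpha(r)}{r}e_\theta$ up to a gradient, the angular part satisfies, for each $r$,
\[
\int_0^{2\pi}|(i\partial_\theta + \alpha(r))u|^2\,d\theta \ \ge\ \operatorname{dist}(\alpha(r),\Z)^2\int_0^{2\pi}|u|^2\,d\theta .
\]
This gives $Q_A[u]\ge \int_{\R^2} W|u|^2$ with $W(x)=r^{-2}\operatorname{dist}(\alpha(r),\Z)^2\ge0$, and $W\ge c>0$ on a set $E\subset \ball_{r_2}\setminus\ball_{r_1}$ of positive measure. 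This is the Laptev--Weidl argument. For general $A$ one needs $|A|\in L^2_{\rm loc}$ for the form to make sense; regularity ($p>1$) makes the Poincaré gauge $A\in L^q_{\rm loc}$, $q>2$, by Sobolev embedding.

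\emph{Step 2 (spreading the positivity).} Using Theorem~\ref{thm-weidl} (local equivalence with $H^1$), the diamagnetic inequality $|\nabla|u||\le|(i\nabla+A)u|$, and a Poincaré-type inequality on the annulus $K=\ball_{r_2}\setminus\ball_{r_1}$ with the positive-measure set $E$, I get
\[
\int_K|u|^2\lesssim \int_K|\nabla|u||^2+\int_E|u|^2\lesssim Q_A[u].
\]
More precisely, one applies $\int_K|v-\bar v_E|^2\lesssim \int_K|\nabla v|^2$ to $v=|u|$.

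\emph{Step 3 (logarithmic Hardy).} Set $v=|u|$ and $f(t)=\frac1{2\pi}\int v(e^t,\theta)\,d\theta$, or work directly with $v$ along rays. In the variable $t=\log r$ one has $\int|\nabla v|^2dx\ge\int\!\!\int|\partial_t v|^2\,dt\,d\theta$ and $\rho_0\,dx=\frac{dt\,d\theta}{1+t^2}$. The one-dimensional inequality I need is
\[
\int_{\R}\frac{|g(t)|^2}{1+t^2}\,dt\lesssim \int_{\R}|g'|^2\,dt+\int_{t_1}^{t_2}|g|^2\,dt ,
\]
which follows from the classical Hardy inequality $\int_a^\infty |h|^2/(t-a)^2\le4\int|h'|^2$ applied to $g-g(\cdot)$ cutoff pieces on each half-line together with a bound on $g$ on $[t_1,t_2]$. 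Integrating in $\theta$ and combining with Steps 1–2 yields \eqref{hardy-eq}, with $C_B$ depending on $r_1,r_2$, $E$ and $B$.

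I expect the main obstacle to be Step 2 together with the gauge issue in Step 1. The angular bound needs a gauge in which $A$ is tangential with $\alpha(r)$ controlling the flux. The Poincaré argument requires the passage from positivity on the set $E$ to the full annulus, with constants independent of $u$, which I would handle by a compactness/contradiction argument in $H^1(K)$.
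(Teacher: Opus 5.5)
Your argument goes through, but two points need fixing. First, translation is not a symmetry of \eqref{hardy-eq}. The weight $\rho_0$ is singular at the origin, and $\rho_0(x-x_0)$ is not comparable to $\rho_0(x)$ near $0$ or near $x_0$. So ``assume $x_0=0$'' proves the inequality with the wrong weight. In general you do need some $x_0$ other than the origin: the centred flux $\frac1{2\pi}\int_{\ball_r}B$ vanishes identically for odd $B$. So keep $x_0$ and proceed as follows. At a Lebesgue point with $B(x_0)\neq 0$ one has $\alpha_{x_0}(r)=\frac{r^2}{2}(B(x_0)+o(1))$, so $0<|\alpha_{x_0}|<1$ on an interval, and $E$ can be taken to be a full annulus about $x_0$. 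Then run Step 2 on a large ball $\ball_R$ containing both $E$ and an annulus $\{r_1<|x|<r_2\}$. Finally, do Step 3 in polar coordinates centred at the origin.

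Second, the Poincar\'e gauge need not lie in $L^2_{\rm loc}$ when $B\in L^p_{\rm loc}$ with $p<2$. For example, take $B=|\theta-\pi|^{-1/2}$ in the unit disc about $x_0$ and zero outside; then $|A|^2=\frac{r^2}{4}|\theta-\pi|^{-1}$. Sobolev embedding applies to $\nabla^\perp h$ with $-\Delta h=B$, which is the gauge the paper uses. Also, $A$ is not $\frac{\alpha(r)}{r}e_\theta$ up to a gradient unless $B$ is radial. Neither point matters. Discard the radial part of $|(i\nabla+A)u|^2$; then on each circle, by Stokes, the operator $i\partial_\theta+rA_\theta(r,\cdot)$ has eigenvalues of modulus at least $\operatorname{dist}(\alpha(r),\Z)$. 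This holds in any gauge with $A_\theta(r,\cdot)\in L^1$ for a.e.\ $r$, in particular for $\nabla^\perp h$. Alternatively, simply cite Theorem~\ref{thm-weidl}(2), as the paper does.

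With these repairs your route differs genuinely from the paper's. The paper gets $Q_A[u]\gtrsim\|u\|_{L^2(\ball_3)}^2$ by citing Weidl's local Hardy bound, which your Steps 1--2 reprove. The real difference is the logarithmic weight. The paper treats the two ends separately. Near the origin it uses Lemma~\ref{lem-equiv-2}: a 2D ground-state substitution with $f=-1/(2r\log(r/r_0))$, whose boundary term is controlled by the trace theorem. At infinity it uses a 1D argument with the cutoff $\xi$ and the Hardy weight $(4r\log^2 r)^{-1}$. You observe that in $t=\log r$ one has $\int|\partial_r v|^2dx=\int\!\!\int|\partial_t v|^2\,dt\,d\theta$ and $\rho_0\,dx=dt\,d\theta/(1+t^2)$. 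This weight is symmetric under $t\mapsto -t$, so both ends reduce to the same half-line Hardy inequality, anchored at $t_2$ and at $t_1$. Your 1D inequality is correct. On $[t_2,\infty)$ use $1+t^2\gtrsim (t-t_2)^2$ and $|g(t_2)|^2\lesssim\int_{t_1}^{t_2}(|g|^2+|g'|^2)$. No decay of $g$ at $-\infty$ is needed, which matters because $|u|(e^t,\theta)\to|u(0)|$ there. Your version is shorter and more elementary, with no trace theorem and no cutoff. The paper's route yields Lemma~\ref{lem-equiv-2} as a standalone estimate on general $C^1$ domains, which it reuses in Section~\ref{sec-singular}.
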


\smallskip

A couple of comments concerning Theorem \ref{thm-hardy} are in order

\begin{remark}[\bf Optimality at zero]\label{rem-1} 
The power of the logarithm on the right hand side of \eqref{hardy-eq} is optimal locally in the vicinity of zero in the 
following sense. Given any magnetic field satisfying \eqref{ass-regular} the inequality 
\begin{equation} \label{eq:opt_2}
\displaystyleint_{\ball_{1/e}} \dfrac{|u|^2}{r^2\, |\log r|^b} \, dx\  \le\  c\,  Q_A[u] \qquad u\in d_0(Q_A)
\end{equation}
fails for any $c\in\R$ and any $b<2$. To see this consider the radial function $u_\alpha(r) = |\log_+(1/ r)|^\alpha$.  Clearly $u_\alpha$ is compactly supported and a quick calculation shows  that $u_\alpha\in H^1_{\rm loc}(\R^2)$ for {\it any} $\alpha < \frac 12$.
Hence, by Theorem \ref{thm-weidl}(1), we have $|(i\nabla +A) u_\alpha| \in L^2(\R^2)$. On the other hand, the left hand side of  
\eqref{eq:opt_2} with $u=u_\alpha$ is finite if and only if $b> 1 +2\alpha$.
\end{remark}

\smallskip

\begin{remark}[\bf Optimality at infinity]\label{rem-2} Without further restrictions on the magnetic field, the power of the logarithm on the right hand side of \eqref{hardy-eq} is optimal also at infinity. 
This follows from Proposition \ref{prop-log-global} below. Note however, that for magnetic fields with finite non-integer flux the logarithmic factor in \eqref{hardy-eq} can be removed, see \cite{lw-1999}.
\end{remark}

\begin{remark}
In the last section of the paper we discuss an application of inequality \eqref{hardy-eq} to spectral inequalities for two-dimensional magnetic Schr\"odinger operators, see Theorem \ref{thm-counting}. 
\end{remark}

In order to state our main result for singular fields we need  to introduce some additional notation. We say that a magnetic field $B$ is  {\it singular} if it can be decomposed into 
a sum 
\begin{equation} \label{B-decomp}
  B =  B_s + B_r, 
\end{equation}
where $ B_r$ satisfies \eqref{ass-regular}, and $B_s\in L^1_{\rm loc}(\R^2)$ has an isolated singularity in the origin which is strong enough so that 
\begin{equation} \label{bs-diverge}
\forall\, \eps>0: \quad \lim_{r\to 0} r^{-\eps}\!\! \int_{\ball_r} B_s\, dx = \pm \infty. 
\end{equation}
Notice that, in view of the decomposition  \eqref{B-decomp}, this condition contradicts the validity of \eqref{ass-regular}  for $B$.

\begin{assumption}\label{ass-singular}
Suppose that $B$ satisfies \eqref{B-decomp} and \eqref{bs-diverge} with $B_s$ being bounded on compact 
subsets of $\R^2\setminus\{0\}$.
 \end{assumption}

The domain of the quadratic form $Q_A$ is then given by
\begin{equation} \label{domain-s}
d(Q_A) = \overline{C_0^\infty(\R^2\setminus\{0\})}^{\, \vvvert{ \cdot}_A}\, .
\end{equation}

\medskip

In the next result we quantify the local contribution of the flux to the integral weight in the Hardy inequality.

\begin{theorem}[\bf Hardy inequality for singular magnetic fields] \label{thm-singular-local} 
Let $B $ satisfy Assumption \ref{ass-singular}. Then there exists $\eta>0$ and a constant $c(B,\eta)$ such that
 for all $u\in d(Q_A)$ we have
\begin{equation}  \label{hardy-singular-local}
Q_A[u] \, \geq\, c(B,\eta) \int_{\R^2} \rho\, |u|^2\, dx\, ,
\end{equation} 
where
\begin{equation}\label{rho}
\rho(r) = \begin{cases}
\rho_0(r) + \Phi^2(r)  & \text{ if} \quad r \leq \eta \\[3pt]
 \rho_0(r)  & \text{ if} \quad  r > \eta
	\end{cases}\, .
\end{equation}
Recall that $\alpha(\cdot)$ is defined in \eqref{alpfa-r}.
\end{theorem}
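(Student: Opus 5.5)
We prove \eqref{hardy-singular-local} by splitting the weight into its two pieces: a global contribution of order $\rho_0$ and a local contribution of order $\Phi^2$ near the origin. We then add the two resulting inequalities.

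\textbf{Step 1: the $\Phi^2$ term near the origin.} By gauge invariance we work with the transversal (Poincaré) gauge, whose angular component is
\[
A_\theta(r,\theta)=\frac{1}{r}\int_0^r B(s,\theta)\,s\,ds .
\]
In polar coordinates, for $u\in C_0^\infty(\R^2\setminus\{0\})$,
\[
Q_A[u]\ \ge\ \int_0^\infty\!\!\int_0^{2\pi} \frac{1}{r^2}\,\big|(i\partial_\theta + rA_\theta)u\big|^2\,d\theta\, r\,dr .
\]
For fixed $r$ the angular operator $i\partial_\theta+rA_\theta$ on $L^2(0,2\pi)$ is unitarily equivalent, via the gauge $u\mapsto e^{-i\int_0^\theta (rA_\theta-\alpha(r))}u$, to $i\partial_\theta+\alpha(r)$. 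Here $\alpha(r)=\frac{1}{2\pi}\int_0^{2\pi} rA_\theta\,d\theta$ is exactly the flux \eqref{alpfa-r}. Its lowest eigenvalue is $\min_{k\in\Z}|\alpha(r)-k|^2$, so
\[
Q_A[u]\ \ge\ \int_{\R^2}\frac{\min_k|\alpha(r)-k|^2}{r^2}\,|u|^2\,dx .
\]
This is weaker than $\Phi^2=\alpha^2/r^2$ whenever $\alpha$ is near a nonzero integer. But by \eqref{bs-diverge}, and since the regular part $B_r$ contributes a flux $O(r^{2-2/p})\to0$, we have $|\alpha(r)|\to\infty$ as $r\to0$. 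Hence the distance to $\Z$ does not simply control $\alpha$.

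\textbf{Step 2: recovering the full $\alpha^2$ (the main obstacle).} Rather than the pointwise angular bound, we use the radial derivative to prevent $\alpha(r)$ from sitting near integers. On each angular Fourier mode $k$, the radial quadratic form is
\[
\int \Big(|\partial_r u_k|^2 + \frac{(k-\alpha(r))^2}{r^2}|u_k|^2\Big) r\,dr .
\]
Since $\alpha$ is monotone and unbounded near $0$ (we can take $B_s$ of one sign near $0$ after shrinking $\eta$; if not, we work with $|\alpha|$ and its growth), the set where $|k-\alpha(r)|\le|\alpha(r)|/2$ is a thin shell in $\log r$. Its width is controlled by the rate of growth of $\alpha$, which by \eqref{bs-diverge} beats any power. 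On that shell we apply a one-dimensional Hardy/Poincaré-type argument in the variable $t=\log(1/r)$, transferring mass from the adjacent region where $(k-\alpha)^2\ge\alpha^2/4$. This gives
\[
\int\Big(|\partial_r u_k|^2+\frac{(k-\alpha)^2}{r^2}|u_k|^2\Big) r\,dr \ \gtrsim\ \int_0^\eta \Phi^2|u_k|^2\,r\,dr
\]
uniformly in $k$. Making this uniform is where the real work lies; if it fails, the first fallback is to exploit the fast growth of $\alpha$ to show the shells shrink relative to $r$.

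\textbf{Step 3: the $\rho_0$ term and conclusion.} Away from the origin $B$ is regular, i.e.\ it satisfies \eqref{ass-regular} on $\R^2\setminus\ball_{\eta/2}$, and $B\neq0$. The proof of Theorem \ref{thm-hardy} (a Laptev--Weidl type bound on an annulus combined with the logarithmic Hardy inequality) therefore gives $Q_A[u]\gtrsim\int\rho_0|u|^2$ for all $u\in d(Q_A)$. Near $0$ we have $\rho_0\lesssim\Phi^2$ because $|\alpha|\to\infty$, so $\rho_0$ on $\ball_\eta$ is already covered by Step 2. Adding the two bounds and extending from $C_0^\infty(\R^2\setminus\{0\})$ to $d(Q_A)$ by density and Fatou's lemma yields \eqref{hardy-singular-local}.
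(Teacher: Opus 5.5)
There is a genuine gap, and it comes from misreading \eqref{bs-diverge}. That condition does not say the flux blows up. Since $B\in L^1_{\rm loc}(\R^2)$, absolute continuity of the integral gives $\alpha(r)\to0$ as $r\to0$. Condition \eqref{bs-diverge} only says that $\int_{\ball_r}B_s\,dx$ tends to zero more slowly than every power $r^{\eps}$.

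With the correct reading, your Step~1 already gives the whole $\Phi^2$ term. Choose $\eta$ with $|\alpha(r)|\le\frac12$ for $r\le\eta$. Then $\min_{k\in\Z}|\alpha(r)-k|=|\alpha(r)|$ there, so $Q_A[u]\ge\int_{\ball_\eta}\Phi^2|u|^2\,dx$. This is exactly the paper's argument: the eigenvalues $\lambda_m=m-\alpha$ of $K_r$ together with $|\alpha|\le\frac14$, cf.\ \eqref{delta-cond} and \eqref{lambda-lowerb}.

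Your Step~2, which you leave unproven, is aimed at an obstacle that does not exist. Under your premise it would in fact be false. In the variable $t=\log(1/r)$, a bound $\gtrsim\alpha^2$ uniform in the mode $k$ would force $\alpha$ to grow by a fixed factor on $t$-intervals of length $O(1/|\alpha|)$. That would make $\alpha$ blow up at a positive radius.

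The same misreading invalidates Step~3. The comparison $\rho_0\lesssim\Phi^2$ near $0$ is equivalent to $|\alpha(r)\log r|\gtrsim1$, which fails in general. For the field \eqref{ex-1} with $\gamma>2$ one has $\alpha\asymp|\log r|^{1-\gamma}$, so $\Phi^2=o(\rho_0)$; this is why the paper notes $\rho\asymp\rho_0$ for $\gamma\ge2$.

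So the $\rho_0$ term on $\ball_\eta$ must be produced independently. It comes from the diamagnetic inequality \eqref{kato} and Lemma~\ref{lem-equiv-2} on a ball containing the origin, and this needs $L^2$ control of $u$ on that ball. Theorem~\ref{thm-weidl} cannot supply it: it requires $|A|\in L^q_{\rm loc}(\R^2)$ with $q>2$, which fails at the singularity.

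What you are missing is how \eqref{bs-diverge} is actually used:
\begin{itemize}
\item Combined with $|\int_{\ball_r}B_r\,dx|\lesssim r^{2-2/p}$, it gives $|\alpha(r)|\gtrsim r^{\eps}$ for small $r$ and any $\eps<2-2/p$.
\item Hence $\Phi^2\gtrsim r^{2\eps-2}$ is bounded below on $\ball_\eta$ (after shrinking $\eta$), and the $\Phi^2$ bound yields $Q_A[u]\gtrsim\|u\|^2_{L^2(\ball_\eta)}$.
\item Then \eqref{kato} and Lemma~\ref{lem-equiv-2} with $\Omega=\overline{\ball_\eta}$ give $Q_A[u]\gtrsim\int_{\ball_\eta}\rho_0|u|^2\,dx$.
\item Finally, the one-dimensional argument behind \eqref{enough-h}, with the cut-off rescaled to $\eta$, extends the $\rho_0$ weight to $\R^2\setminus\ball_\eta$.
\end{itemize}
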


\begin{remark} 
It follows from  equation \eqref{alpfa-r} that the term $\Phi^2(r)$ can be arbitrarily close to the  singularity generated by the Aharonov-Bohm vector potential $A_{ab}$, cf.~equation \eqref{ab}. Hence the weight function covers, by varying $\alpha(\cdot)$, the whole ``gap" between $\rho_0(\cdot)$ and $|\cdot|^{-2}$.  
Explicit  examples are given in Section \ref{sec-examples}, see in particular Example \eqref{ex-1}.  \end{remark}
\smallskip 

\begin{remark} 
Notice also that in the case of the Aharonov-Bohm field, which is not covered by Theorem \ref{thm-singular-local}, we have $\alpha(r) =${\it constant}, see \eqref{ab}.
\end{remark}

\smallskip 

\begin{remark} 
A comparison of equations \eqref{rho} and \eqref{weight-rho} shows that the contribution from $\Phi$ becomes dominant  as soon as $\alpha(r)\to 0$ slowly enough so that 
$|\alpha(r) \log r | \to \infty$ as $r \to 0$. This defines also the borderline case in which the graph norm $\vvvert{ \cdot}_A$ is no longer locally equivalent to the $H^1-$norm. Indeed, 
by Corollary \ref{cor-domain} we have 
\begin{equation} \label{f-domain}
d(Q_A) = \big\{ u\in d_0(Q_{A_r}) : \  | \Phi u| \in L_{\rm loc}^2(\R^2)\big\}\, ,
\end{equation}
where  $A_r$ satisfies $\nabla\times A_r=B_r$, cf.~\eqref{B-decomp}. 
\end{remark}

\subsubsection*{\bf Outline of the paper} Theorems \ref{thm-hardy} and \ref{thm-singular-local} are proved in Sections \ref{sec-regular} and \ref{sec-singular} respectively. In the closing Section \ref{sec-application} we then discuss an application of inequality \eqref{hardy-eq} to non-semiclassical estimates for the number of negative eigenvalues of magnetic Schr\"odinger operators.

\section{\bf Regular magnetic fields}
\label{sec-regular}

We start by describing our choice of the gauge. Given a regular magnetic field $B$ we consider a distributional solution of the equation 
\begin{equation} \label{h-B}
-\Delta h = B\, .
\end{equation}
Then  the vector field $A: \R^2 \to \R^2$ given by
\begin{equation} \label{A-def}
A = (\partial_{x_2}h, -\partial_{x_1}h)
\end{equation}
satisfies $\nabla \times A =B$ in the sense of distributions. 

\begin{lemma} \label{lem-A-regular}
Suppose that $B$ satisfies \eqref{ass-regular} and let $A$ be given by \eqref{A-def}. Then $|A|\in L^{q}_{\loc}(\R^2)$  for some $q>2$.
\end{lemma}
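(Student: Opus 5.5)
Since $|A|=|\nabla h|$, it suffices to show $\nabla h\in L^q_{\loc}(\R^2)$ for some $q>2$, where $h$ solves \eqref{h-B}. The gauge is not fully pinned down by \eqref{h-B}: any two distributional solutions differ by a harmonic function, which is smooth. Its gradient is therefore locally bounded and does not affect local integrability. So it is enough to produce one solution with the required property, locally near an arbitrary ball $\ball_R$.

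Fix $R>0$ and write $B=B\chi_{\ball_{2R}}+B(1-\chi_{\ball_{2R}})=:B_1+B_2$. By \eqref{ass-regular}, $B_1\in L^p(\R^2)$ with compact support, for some $1<p<2$; we may lower $p$ if needed, since local integrability improves as $p$ decreases on bounded sets. Define
\[
h_1(x)=-\frac1{2\pi}\int_{\R^2}\log|x-y|\,B_1(y)\,dy .
\]
This is a distributional solution of $-\Delta h_1=B_1$, and
\[
\nabla h_1(x)=-\frac1{2\pi}\int_{\R^2}\frac{x-y}{|x-y|^2}\,B_1(y)\,dy .
\]
The kernel satisfies $|x-y|^{-1}$, i.e.\ $|\nabla h_1|\le \frac1{2\pi}I_1(|B_1|)$, where $I_1$ is the Riesz potential of order one in $\R^2$. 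By the Hardy--Littlewood--Sobolev inequality, $I_1:L^p\to L^{q}$ with $\frac1q=\frac1p-\frac12$. Since $p>1$, we get $q>2$, and hence $\nabla h_1\in L^q(\R^2)$. (If $p\ge2$, apply HLS after replacing $p$ by some $p'\in(1,2)$.) As an alternative to HLS, one can use Young's inequality on the bounded set $\ball_{R}$, because the kernel $|z|^{-1}\chi_{\ball_{4R}}$ lies in $L^s$ for every $s<2$. With $1+\frac1q=\frac1s+\frac1p$ and $s$ close to $2$, this again yields some $q>2$.

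The main obstacle is the far part $B_2$. A global solution of $-\Delta h=B$ need not be given by the logarithmic potential, since that potential may diverge when $B$ does not decay. I would handle this using the freedom noted above. On $\ball_{2R}$, any $h$ solving \eqref{h-B} satisfies $\Delta(h-h_1)=0$ in the sense of distributions. By Weyl's lemma, $h-h_1$ is then smooth and harmonic there, so $\nabla(h-h_1)$ is bounded on $\ball_R$. Consequently $\nabla h\in L^q(\ball_R)$. Since $R$ was arbitrary, this gives $|A|\in L^q_{\loc}(\R^2)$.

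One point needs care: the exponent $q$ must be uniform in $R$. It is, because $q$ depends only on $p$ (fixed by \eqref{ass-regular}) and not on $R$.
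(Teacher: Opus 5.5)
Your proof is correct and reaches the same exponent $q=\frac{2p}{2-p}$ as the paper, but by a different route.

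The paper uses elliptic regularity to get $h\in W^{2,p}_{\loc}(\R^2)$ and then applies the Sobolev embedding $W^{1,p}\hookrightarrow L^{2p/(2-p)}$ to $\nabla h$. That rests on Calder\'on--Zygmund $L^p$ bounds for second derivatives. The reference it cites is an existence result for strong solutions of a Dirichlet problem, so applying it to an arbitrary distributional solution of \eqref{h-B} tacitly needs the same ``the difference is harmonic'' step that you make explicit.

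You work with first derivatives only. You localize $B$ to $\ball_{2R}$ and write $\nabla h_1$ as a convolution with a kernel of modulus $|x-y|^{-1}$. You then apply Hardy--Littlewood--Sobolev to this Riesz potential, or Young's inequality with the truncated kernel, which loses only the endpoint exponent and is irrelevant here. Finally, Weyl's lemma disposes of both the gauge freedom and the possibly non-decaying far field.

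What each buys: your argument is more elementary and self-contained. The paper's is shorter and gives the stronger, here unneeded, information $h\in W^{2,p}_{\loc}$.

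Two wording slips, neither affecting the proof. ``The kernel satisfies $|x-y|^{-1}$'' should say its modulus equals $|x-y|^{-1}$. The remark about lowering $p$ is simply the inclusion $L^p_{\loc}\subset L^{\tilde p}_{\loc}$ for $\tilde p<p$.
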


\begin{proof} 
Without loss of generality we may assume that $B \in L^p_{\rm\, loc} (\R^2)$ with $1<p<2$.  By the elliptic regularity we then have $h \in W_{\loc}^{2,p}(\R^2)$,
see e.g.~ \cite[Thm.~9.15]{GT}. Now the Sobolev imbedding, see e.g.~\cite[Thm.~8.8]{lieb-loss},  implies that  $|\nabla h| \in L^q_{\loc}(\R^2)$ for any 
$q\leq \frac{2p}{2-p}$. Since  $\frac{2p}{2-p} >2$ by assumption, this proves  the claim.
\end{proof}

In view of Lemma \ref{lem-A-regular} and  \cite[Theorem 2.2]{s1}  it follows that the quadratic form \eqref{eq-form}
is closed on the domain
\begin{equation} \label{q-form-1}
d_0(Q_A) = \big\{ u\in L^2(\R^2): \  |(i\nabla+A)u|\in L^2(\R^2)\big\}.
\end{equation}
where $\vvvert{u}^2_A$ is defined in \eqref{vvv}.


The following theorem, which is due to Weidl, will be important in the proof of our main results.

\begin{theorem}[Weidl] \label{thm-weidl}
Suppose that  $|A|\in L^{q}_{\loc}(\R^2)$  for some $q>2$. 
\begin{enumerate} 
\item  Let $U\subset\R^2$ be a bounded connected Lipschitz domain and let $\Omega\subseteq U$ be a set of positive Lebesgue  
measure. Then
for any $u\in H^1(U)$, 
\begin{equation} \label{equiv-1}
\|(i\nabla+A)u\|_{L^2(U)}^2+ \|u\|^2_{L^2(\Omega)} \ \asymp\ \|u\|_{H^1(U)}^2\, .
\end{equation}

\item If $B\neq 0$, then for any compact set $\Omega\subset\R^2$ of positive Lebesgue  measure there exists a constant $c(\Omega,B)$ such that 
\begin{equation} \label{hardy-local-wedil}
Q_A[u] \ge c(\Omega,B) \int_\Omega |u|^2  \, dx 
\end{equation}
for all $u\in d(Q_A)$.
\end{enumerate}
\end{theorem}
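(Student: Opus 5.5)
The plan is to prove part (1) by combining the diamagnetic inequality with a Poincaré-type bound for $|u|$, and part (2) by a compactness argument by contradiction, using part (1) and Rellich's theorem.

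\textbf{Part (1), easy direction.} Fix $U$ and $\Omega$ as in the statement and choose $s<\infty$ with $\frac12=\frac1q+\frac1s$; this is possible since $q>2$. Pointwise, $|(i\nabla+A)u|\le |\nabla u|+|A||u|$. By Hölder and the Sobolev embedding $H^1(U)\hookrightarrow L^s(U)$ (valid for bounded Lipschitz domains in dimension two and every finite $s$),
\[
\|Au\|_{L^2(U)}\le \|A\|_{L^q(U)}\|u\|_{L^s(U)}\lesssim \|u\|_{H^1(U)} .
\]
This gives $\|(i\nabla+A)u\|_{L^2(U)}^2+\|u\|_{L^2(\Omega)}^2\lesssim \|u\|_{H^1(U)}^2$.

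\textbf{Part (1), reverse direction.} First I will apply the diamagnetic inequality $|\nabla|u||\le |(i\nabla+A)u|$ a.e. Next I will use the Poincaré-type inequality
\[
\|v\|_{H^1(U)}\lesssim \|\nabla v\|_{L^2(U)}+\|v\|_{L^2(\Omega)}, \qquad v\in H^1(U),
\]
which holds because $U$ is connected and Lipschitz and $|\Omega|>0$. It follows from the usual contradiction argument with Rellich compactness: a limit with $\nabla v=0$ is constant on the connected set $U$, and vanishing on $\Omega$ forces that constant to be zero. Applied to $v=|u|$, these two steps give
\[
\||u|\|_{H^1(U)}\lesssim \|(i\nabla+A)u\|_{L^2(U)}+\|u\|_{L^2(\Omega)} .
\]
Then, writing $\nabla u = -i\big((i\nabla+A)u - Au\big)$ and using $\|u\|_{L^s(U)}=\||u|\|_{L^s(U)}\lesssim \||u|\|_{H^1(U)}$, I get
\[
\|\nabla u\|_{L^2(U)}\le \|(i\nabla+A)u\|_{L^2(U)}+\|A\|_{L^q(U)}\,\||u|\|_{L^s(U)} .
\]
The right-hand side is controlled by the right-hand side of the previous display, so no absorption argument is needed. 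The $L^2(U)$ part of $\|u\|_{H^1(U)}$ is $\||u|\|_{L^2(U)}$, which is covered in the same way, and this proves \eqref{equiv-1}.

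\textbf{Part (2).} I argue by contradiction. Suppose there are $u_n\in d(Q_A)$ with $\int_\Omega|u_n|^2=1$ and $Q_A[u_n]\to0$. Since $B\neq0$ on a set of positive measure, I can choose a ball $U$ containing both $\Omega$ and a set of positive measure on which $B\ne0$.
\begin{enumerate}
\item Part (1) shows that $(u_n)$ is bounded in $H^1(U)$.
\item By Rellich's theorem, after passing to a subsequence, $u_n\to u$ in $L^2(U)$ and weakly in $H^1(U)$. In particular $\int_\Omega|u|^2=1$.
\item Multiplication by $A$ is bounded, hence weakly continuous, from $H^1(U)$ to $L^2(U)$, again by Hölder and Sobolev. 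So $(i\nabla+A)u_n\rightharpoonup(i\nabla+A)u$ in $L^2(U)$, and weak lower semicontinuity gives $(i\nabla+A)u=0$ on $U$.
\item By the diamagnetic inequality, $\nabla|u|=0$, so $|u|\equiv c>0$ on $U$.
\item Using $\nabla u=iAu$, set $A=-i\,\bar u\nabla u/c^2$.
\item Compute the distributional curl: $\nabla\times(\bar u\nabla u)=\nabla\bar u\times\nabla u=(-iA\bar u)\times(iAu)=c^2\,A\times A=0$. This gives $B=\nabla\times A=0$ on $U$, which is a contradiction.
\end{enumerate}

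\textbf{Main obstacle.} I expect the delicate point to be justifying this last curl computation at low regularity. Here $u\in H^1(U)$ with $|u|=c$, and $\nabla u=iAu\in L^q(U)$, so $u\in W^{1,q}(U)\cap L^\infty(U)$. Hence $\bar u\nabla u\in L^q(U)$, and the identity $\nabla\times(\bar u\nabla u)=\nabla\bar u\times\nabla u$ can be checked by mollifying $u$ and passing to the limit in $L^{q/2}(U)$.
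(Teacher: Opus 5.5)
Your proof is correct. It differs from the paper only in that the paper gives no argument of its own: it simply refers to Weidl's papers (Cor.~9.2 and Lemma~4.1 there), whereas you give a self-contained proof.

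\textbf{Part (1).} You apply the Poincar\'e inequality $\|v\|_{H^1(U)}\lesssim\|\nabla v\|_{L^2(U)}+\|v\|_{L^2(\Omega)}$ to $v=|u|$ via the diamagnetic inequality. You then combine it with the H\"older--Sobolev bound $\|Au\|_{L^2(U)}\le\|A\|_{L^q(U)}\|u\|_{L^s(U)}$, where $\frac12=\frac1q+\frac1s$. The key observation is that $\|u\|_{L^s(U)}=\||u|\|_{L^s(U)}$ is already controlled by the first step. This removes any need for an absorption argument and shows exactly where $q>2$ enters.

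\textbf{Part (2).} Your compactness argument reduces everything to a rigidity statement. A nontrivial solution of $(i\nabla+A)u=0$ on a ball has constant modulus, so it lies in $W^{1,q}\cap L^\infty$. The identity $\nabla\times(\bar u\nabla u)=\nabla\bar u\times\nabla u$ then forces $\nabla\times A=0$ there. This neatly avoids any lifting $u=c\,e^{i\varphi}$. The price is that all constants come from contradiction arguments and are not explicit, but the statement only asserts their existence. The citation, by contrast, buys brevity.

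\textbf{Two points to make explicit.}
\begin{enumerate}
\item For $A$ merely in $L^q_{\rm loc}$, the field $B=\nabla\times A$ is only a distribution. So ``$B\neq 0$'' should be used by choosing a ball $U\supset\Omega$ with $B|_U\neq 0$ in $\mathcal{D}'(U)$. In the paper $B\in L^p_{\rm loc}$, so your formulation is fine there.
\item Both sides of the inequality in (2) are continuous in the graph norm, so you may take the $u_n$ in $C_0^\infty$. This makes the application of part (1) to elements of $d(Q_A)$ immediate.
\end{enumerate}
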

 
For the proof of Theorem \ref{thm-weidl} we refer to \cite[Cor.~9.2]{w-cpde-1999} and  \cite[Lemma 4.1]{w-jlms-1999}. 

\medskip

We now  prove a result which plays a crucial role in the proof of inequality \eqref{hardy-eq}.  
In order to simplify the notation in the sequel, given a bounded set $\Omega$ we introduce the shorthand 
 \begin{equation}\label{w-weight}
    \|u\|_{L^2_w(\Omega)}=\Big\| \dfrac{u}{r \log(r/r_0)} \Big\|_{L^2(\Omega)}\, ,
 \end{equation}
where $r_0$ is large enough so that $\Omega \Subset \ball_{r_0}$.

\begin{lemma} \label{lem-equiv-2}
 Let $\Omega \subset\R^2$ be a compact set with $C^1$ boundary and 
 suppose that $\Omega$ contains an open neighbourhood of the origin. 
 Let $r_0$ be such that $\Omega \Subset \ball_{r_0}$. Then, for any $u\in H^1(\Omega)$, 
\begin{equation}\label{eq: equiv_norm_1}
 \|u\|^2_{L^2_w(\Omega)} \ \leq \ C\, \|u\|_{H^1(\Omega)}^2\, .
\end{equation}
where the constant $C$ depends only on $\Omega$. 
\end{lemma}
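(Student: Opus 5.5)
This is the two-dimensional critical Hardy inequality with logarithmic weight. I would prove it by a cut-off argument: first the one-dimensional Hardy inequality on a ball around the origin, then a trivial bound away from it.

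\emph{Step 1: choice of the ball.} Since $\Omega$ contains a neighbourhood of the origin, I pick $\delta>0$ with $\overline{\ball_{2\delta}}\subset\Omega$. I also take a cut-off $\chi\in C_0^\infty(\ball_{2\delta})$ with $0\le\chi\le1$ and $\chi=1$ on $\ball_\delta$. I then write $u=\chi u+(1-\chi)u$.

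\emph{Step 2: the region away from the origin.} On $\Omega\setminus\ball_\delta$ the weight satisfies
\begin{equation*}
\frac{1}{r^2\log^2(r/r_0)}\le C(\delta,r_0),
\end{equation*}
because $\delta\le r\le \sup_\Omega|x|<r_0$. Hence
\begin{equation*}
\|(1-\chi)u\|^2_{L^2_w(\Omega)}\lesssim\|u\|^2_{L^2(\Omega)}.
\end{equation*}

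\emph{Step 3: the core estimate for $v=\chi u$.} The function $v$ belongs to $H^1$ and has support in $\ball_{2\delta}$; extending it by zero gives an $H^1(\R^2)$ function. By density it suffices to treat smooth $v$ with support in $\ball_{2\delta}$. For such $v$ I would prove
\begin{equation}\label{log-hardy-plan}
\int_{\ball_{2\delta}}\frac{|v|^2}{r^2\log^2(r/r_0)}\,dx\;\le\;4\int_{\ball_{2\delta}}|\partial_r v|^2\,dx .
\end{equation}
I work in polar coordinates with fixed angle and set $s=\log(r_0/r)$, which is positive and tends to $\infty$ as $r\to0$. The estimate then becomes the one-dimensional inequality
\begin{equation*}
\int_{s_1}^\infty\frac{|g|^2}{s^2}\,ds\le 4\int_{s_1}^\infty|g'|^2\,ds,
\end{equation*}
where $g$ vanishes near $s_1=\log(r_0/2\delta)>0$. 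This is the classical Hardy inequality on a half-line.

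The main obstacle is the behaviour at $s=\infty$, that is, at $r=0$. There $g$ tends to the value $v(0)$, which need not vanish, and one must check that no boundary term enters. Since $g$ vanishes at the left endpoint $s_1$, the correct form of Hardy is the one with Dirichlet condition at $s_1$ and free behaviour at infinity, for the weight $1/s^2$. That form does hold. Alternatively, I would integrate by parts directly in $r$, using
\begin{equation*}
\frac{d}{dr}\Big(\frac{1}{\log(r_0/r)}\Big)=\frac{1}{r\log^2(r_0/r)} .
\end{equation*}
The boundary term at $r=0$ vanishes because $1/\log(r_0/r)\to0$, and the term at $r=2\delta$ vanishes because $v$ does. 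Cauchy--Schwarz then yields \eqref{log-hardy-plan}.

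\emph{Step 4: conclusion.} Since $\|\nabla(\chi u)\|_2\lesssim\|u\|_{H^1(\Omega)}$, the constant depends on $\chi$, $\delta$ and $r_0$. These quantities are determined by $\Omega$ once $r_0$ is fixed, for example with $r_0$ comparable to $\operatorname{diam}\Omega$. Combining Steps 2 and 3 through $(a+b)^2\le2a^2+2b^2$ gives \eqref{eq: equiv_norm_1}. The $C^1$ boundary of $\Omega$ is needed only so that $H^1(\Omega)$ has the usual density properties.
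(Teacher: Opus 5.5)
Your proof is correct, but it takes a different route from the paper. The paper does not localize. It expands $\int_\Omega|\nabla u+\frac{fu}{r}\,x|^2\,dx\ge0$ with $f=-\frac{1}{2r\log(r/r_0)}$ and integrates by parts over the whole of $\Omega$. This produces the weight $f'+\frac fr-f^2=\frac{1}{4r^2\log^2(r/r_0)}$ together with the boundary term $\int_{\partial\Omega}|u|^2\,\frac fr\,x\cdot\nu\,d\sigma$. Since $0\notin\partial\Omega$ and $\partial\Omega\subset\ball_{r_0}$, $f$ is bounded on $\partial\Omega$, so the trace theorem controls this term by $\|u\|^2_{H^1(\Omega)}$; this is where the $C^1$ boundary enters.

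You instead split $u=\chi u+(1-\chi)u$, bound the weight trivially away from the origin, and reduce the core estimate via $s=\log(r_0/r)$ to the half-line Hardy inequality with a Dirichlet condition at $s_1$. Your check that no boundary term survives at $r=0$, because $1/\log(r_0/r)\to0$, is exactly the delicate point. The paper passes over it silently, but there it is harmless: the contribution of a small circle $\{|x|=\epsilon\}$ is $-f(\epsilon)\int_{|x|=\epsilon}|u|^2\,d\sigma\le0$ and can be discarded.

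What each route buys: the paper's argument gives, in a single identity and without a cut-off, the constant $\frac14$ in front of the weight on all of $\Omega$, with an explicit boundary remainder. Yours avoids the trace theorem and is in fact more general. The piece $\chi u$ is supported away from $\partial\Omega$, and $(1-\chi)u$ is only estimated in $L^2$, so no regularity of $\partial\Omega$ is ever used. Your closing remark that the $C^1$ boundary is needed for density is therefore unnecessary, since interior mollification of $\chi u$ suffices. Your argument proves the lemma for any open $\Omega$ containing a neighbourhood of the origin with $\sup_\Omega|x|<r_0$.
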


Note that  $w \notin L^p_{\loc}(\R^2)$ for any $p>1$.  Hence \eqref{eq: equiv_norm_1}
doesn't follow from standard imbedding theorems.

\begin{proof}
Let $u\in\ H^1(\Omega)$. 
We consider the integral
\begin{equation}
\displaystyleint_{\Omega} \big|\nabla u + \dfrac{f u}{r} x\big|^2 dx \, \geq\, 0,
\end{equation}
where $f $ is a radial function to be determined later. Then
\begin{align}\label{eq:quad_for_Lem_9.1}
\displaystyleint_{\Omega} \Big|\nabla u + \dfrac{f u}{r} x\Big |^2 dx &= \displaystyleint_{\Omega} \left(|\nabla u|^2 +  \dfrac{f}{r} \nabla(u^2) \cdot x + f^2 u^2\right) dx,
\end{align}
and  an integration by parts implies
\begin{equation}\label{eq:quad_for_Lem_9.1_2}
\displaystyleint_{\Omega} \dfrac{f}{r} \nabla(u^2) \cdot x dx = \int_{\partial\Omega} u^2 \dfrac{f}{r} x \cdot \nu d\sigma - \displaystyleint_{\Omega} u^2 \nabla \cdot \left(\dfrac{x f}{r}\right) dx\, .
\end{equation}
Here  $\nu$ denotes the external normal unit vector to the boundary. Considering (\ref{eq:quad_for_Lem_9.1}) and (\ref{eq:quad_for_Lem_9.1_2}) we obtain
\begin{equation} \label{2-upperb}
\displaystyleint_{\Omega} u^2 \big(f' + \dfrac{f}{r} - f^2\big )\, dx  \le \displaystyleint_{\Omega} |\nabla u|^2 + BT(u),
\end{equation}
where we have denoted 
\begin{equation}\label{bt}
BT(u):=\int_{\partial\Omega} u^2\, \dfrac{f}{r} \, x \cdot \nu d\sigma
\end{equation}
Choose 
\begin{equation} \label{f-def} 
f = -\dfrac{1}{2 r\log(r/r_0)}
\end{equation}
so that
\begin{equation} \label{weight-log}
f' + \dfrac{f}{r} - f^2 = \dfrac{1}{4r^2 (\log(r/r_0))^2}\, .\\[2pt]
\end{equation}
Next we treat the boundary term \eqref{bt}.
Since $0\not\in\partial\Omega$ by assumptions, we have
$|f|_{\partial\Omega} \lesssim 1 $. Hence by the boundary trace theorem, see e.g.~\cite[Theorem 5.22]{adams-ss}, 
there exists a constant $C=C(\Omega)$ such that
\begin{equation}
BT(u) \ \lesssim\  \displaystyleint_{\partial\Omega} u^2 d\sigma\ \lesssim\  \|u\|^2_{H^1(\Omega)}\, .
\end{equation}
Inserting this bound together with \eqref{weight-log} into \eqref{2-upperb}  proves  \eqref{eq: equiv_norm_1}
\end{proof}

\begin{remark}
The fact that $    \|u\|_{L^2_w(\Omega)} < \infty $ for any $u\in H^1_0(\Omega)$ is known, see e.g.~\cite{bw}, \cite{tar}, \cite{iok}.
\end{remark}



\begin{proof}[\bf Proof of Theorem \ref{thm-hardy}]
Owing to the gauge invariance of inequality \eqref{hardy-eq} we may work with the vector $A$ given by \eqref{A-def}.
In view of Lemma \ref{lem-A-regular} it suffices to prove the claim for any $u\in C_0^\infty(\R^2)$. 
Without loss of generality we may suppose that  $B$ is not identically zero in $\ball_3$. Then by  Theorem \ref{thm-weidl}(1) 
\begin{equation} \label{hardy-local-wedil}
    \|(i\nabla+A)u\|^2_{L^2(\R^2)}\ge C_0\, \|u\|^2_{L^2(\ball_3)}  
\end{equation}
This in combination  with the diamagnetic inequality \cite[Theorem~7.21]{lieb-loss}
\begin{equation} \label{kato}
|\nabla |u| | \leq  | (i\nabla +A)\, u| \qquad \text{a.~e.}
\end{equation}
and Lemma \ref{lem-equiv-2} implies
\begin{equation} \label{hardy-local}
   \|(i\nabla+A)u\|^2_{L^2(\R^2)}\  \gtrsim\ \| |u| \|_{H^1(\ball_3)}^2 \ \gtrsim\  \int_{\ball_3} \dfrac{|u|^2}{{r^{2}(1+\log^{2}r)}}  dx\, .
\end{equation}
It remains to extend  inequality \eqref{hardy-local} to \eqref{hardy-eq}.  We have to show that
\begin{equation} \label{infinity}
\int_{\R^2} |(i\nabla + A)\, u |^2\, dx \ \geq \  C \int_{\ball_3^c} \frac{|u|^2}{r^2 (\log r)^2}\, \, dx, 
\end{equation}
for some $C$. We follow the argument of \cite{k-jst-2011}. In view of the diamagnetic inequality \eqref{kato}, and the fact that $e<3$,  it suffices to prove  that  for all $f\in C_0^\infty(\R_+)$,
\begin{equation} \label{enough-h}
\int_0^\infty |f'(r) |^2\, r\, dr + c \int_0^e |f(r)|^2\, r\, dr \geq \, C \int_e^\infty \frac{|f(r)|^2}{r\, (\log r)^2}\,  dr .
\end{equation}
 Define the function $\xi:\R_+\to\R$ by
$$
\xi(r) = \left\{
\begin{array}{l@{\quad}cr}
c & \text{if \,} & 0< r \leq 1\\
c(2-r)  & \text{if \,} & 1 < r \leq 2
\end{array}
\right.  ,
\qquad 
\xi(r) = \left\{
\begin{array}{l@{\quad}cr}
c(r-2) &  \, \text{if }  & 2 < r \leq e \\
c        & \,  \text{if }  & e < r  
\end{array}
\right. .
$$
An integration by parts gives
$$
\int_0^\infty |(\xi f)'(r) |^2\, r\, dr +c(1-c) \int_0^e |f(r)|^2\, r\, dr \leq \int_0^\infty |f'(r) |^2\, r\, dr + c \int_0^e |f(r)|^2\, r\, dr.
$$
Now, since $\xi(2)=0$, we integrate by parts again to get 
$$
\int_2^\infty  \left( (\xi f)'(r) -\frac{(\xi f)(r)}{2r \log r} \right)^2 r\, dr = \int_2^\infty |(\xi f)'(r) |^2\, r\, dr - \int_2^\infty \frac{|(\xi f)(r)|^2}{4 r\, (\log r)^2}\, dr.
$$
The last two equations imply \eqref{enough-h} and hence \eqref{infinity}. 
\end{proof}

To show that the power of the logarithmic factor  on the right hand side of \eqref{hardy-eq} is optimal at infinity,  we need the following improvement of \cite[Proposition 2.8]{fk-2024}.

\begin{proposition} \label{prop-log-global}
Let $B$ satisfy \eqref{ass-regular} and suppose that
\begin{equation}  \label{ass-B0}
|B(x)| = \mathcal{O}(|x|^{-\nu}), \ \quad \nu>2, \quad |x|\to\infty . 
\end{equation} 
Assume moreover  that 
\begin{equation} \label{flux} 
\int_{\R^2} B\, dx = 2\pi m ,\qquad  m \in \Z.
\end{equation}
Suppose that there exists a function $\w:\R^2\to [0,\infty)$ such that
\begin{equation} \label{hypot}
Q_A[u]\, \geq \,  \int_{\R^2} \w\,  |u|^2\, \, dx \qquad \forall\, u\in d_0(Q_A).
\end{equation}
Then for {\rm any} $\alpha<1$ we have
\begin{equation} \label{log-weight}
\int_{\R^2} \w\, (\log_+r)^\alpha\, \, dx < \infty .
\end{equation}
\end{proposition}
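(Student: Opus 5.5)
Since the total flux is an integer $m$, the plan is to gauge the field away at infinity, so that $Q_A$ behaves like the free Dirichlet form for $|x|$ large, and then test \eqref{hypot} against the standard family of two-dimensional ``almost zero modes'' with logarithmic cutoffs.

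\textbf{Gauge at infinity.} Put $A = A_1 + m A_{ab}/m$, where $A_{ab}$ is the Aharonov--Bohm potential \eqref{ab} with $\mu=m$. Define $A_1 := A - m\,(-x_2,x_1)/|x|^2$ outside $\ball_1$. Its curl is $B$ there. By \eqref{ass-B0} and \eqref{flux}, the flux of $B$ through the circle $|x|=r$ differs from $2\pi m$ by $\mathcal O(r^{2-\nu})$. Hence, in the transversal (Poincar\'e) gauge centred at the origin, the vector potential of $B - $(AB field of flux $2\pi m$) is $\mathcal O(r^{1-\nu})$ as $r\to\infty$. Moreover, $m(-x_2,x_1)/|x|^2 = m\nabla\theta$ with $m\in\Z$, so $e^{-im\theta}$ is a smooth single-valued gauge transformation on $\ball_1^c$. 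By gauge invariance (adjusting $A$ by a gradient, which does not affect \eqref{hypot}), there is a gauge $\tilde A$ with $|\tilde A(x)| \lesssim |x|^{1-\nu}$ for $|x|\ge R_0$.

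\textbf{Test functions.} For $R$ large, take radial $u_R$ equal to $1$ on $\ball_{R_0}^c\cap\ball_R$, cut off smoothly near $R_0$ (multiplied by the appropriate gauge phase, so that $u_R\in d_0(Q_A)$), and equal to
\[
u_R(r)=\frac{\log(R^2/r)}{\log R}\quad\text{for } R\le r\le R^2,
\]
with $u_R=0$ for $r\ge R^2$. Then
\[
Q_{\tilde A}[u_R]\le 2\int|\nabla u_R|^2+2\int|\tilde A|^2|u_R|^2 .
\]
The gradient part is $\lesssim 1 + 1/\log R$. The potential part is bounded uniformly in $R$, since $\int_{r>R_0} r^{2-2\nu}\,dx<\infty$ for $\nu>2$. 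Thus $Q_A[u_R]\le C$ uniformly in $R$. By \eqref{hypot}, $\int_{\ball_R\setminus\ball_{R_0}}\w\le C$ for all $R$, so $\w\in L^1$ outside a compact set. This gives integrability, but not yet the logarithmic weight.

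\textbf{Logarithmic refinement.} To get $(\log_+ r)^\alpha$ with $\alpha<1$, I would use instead $u(r)=(\log r)^{\beta}$ for $r\ge R_0$ with $2\beta=\alpha$, suitably cut off at $r=R^2$ and regularised. The free energy is
\[
\int |\nabla u|^2\,dx \asymp \int \beta^2(\log r)^{2\beta-2}\,\frac{dr}{r},
\]
which is finite iff $2\beta-2<-1$, i.e.\ $\beta<1/2$. This holds precisely because $\alpha<1$. The potential term $\int|\tilde A|^2(\log r)^{2\beta}\,dx$ is again finite since $\nu>2$. Applying \eqref{hypot} to truncations $u_R$, which have uniformly bounded energy, and letting $R\to\infty$ by monotone convergence yields \eqref{log-weight} outside a compact set. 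On the bounded region, $\int_{\ball_{R_0}}\w<\infty$ follows by testing with a fixed $C_0^\infty$ function equal to $1$ (times the gauge phase) there.

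\textbf{Main obstacle.} The delicate step is the global gauge construction: I need $\tilde A$ decaying like $r^{1-\nu}$ at infinity while remaining in $L^q_{\rm loc}$ near the origin, and the phase $e^{im\theta}$ must be single-valued. This is exactly where integrality of the flux enters. I would build $\tilde A$ via \eqref{A-def} from $h=-\frac{1}{2\pi}\log|\cdot|*B$, whose gradient equals $m\,x^\perp/|x|^2+\mathcal O(r^{1-\nu})$ by a multipole expansion. I would then multiply by a cutoff of $e^{im\theta}$ on $\ball_{R_0}^c$; the resulting corrections are compactly supported and bounded, so they do not affect the estimates.
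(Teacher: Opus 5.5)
Your argument is correct in substance, but it is organised differently from the paper's. The paper works throughout in the Poincar\'e gauge and builds the gauge correction into the test function through the explicit phase $\phi(r,\theta)=\theta(m-\alpha(r))+r\int_0^\theta a(r,s)\,ds$. This phase is single-valued because $\phi(r,2\pi)=2\pi m$, and it makes the angular part exact: $(i\partial_\theta+ra)u_n=-(m-\alpha(r))u_n$. The only decay it then needs is the flux deficit $|m-\alpha(r)|\lesssim(1+r)^{2-\nu}$, and one family $u_n=e^{i\phi}(\log_+r)^{\alpha/2}\log_+(n/r)/\log n$ covers the whole plane.

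You instead pass to the Newtonian gauge and strip off $m\nabla\theta$ with the single-valued phase $e^{im\theta}$ outside a large ball. The tail then becomes a free computation with a residual potential of size $\mathcal{O}(r^{-1-\varepsilon})$, and you handle the compact region with a separate fixed test function. The paper's route is more explicit and uses only the flux deficit. Yours keeps the test functions where $B$ is bounded and controls every component of the residual potential at once. In the polar computation one must also include the term $(\partial_r\phi)^2|u_n|^2$; it is harmless at infinity since $\partial_r\phi=\mathcal{O}(r^{1-\nu})$, but it is not shown in the paper's display. Your smoothing at $R_0$ also sidesteps a defect of the paper's test function. For $0<\alpha<1$ the factor $(\log_+r)^{\alpha/2}$ is not in $H^1$ near $|x|=1$, because $|\partial_r(\log r)^{\alpha/2}|^2\sim(\log r)^{\alpha-2}r^{-2}$ is not integrable at $r=1$. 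It must be replaced by something like $(1+\log_+r)^{\alpha/2}$.

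Two of your justifications are false as stated, though neither breaks the argument.

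First, in the Poincar\'e gauge $A-m\nabla\theta=\frac1r\big(\int_0^rB(t,\theta)\,t\,dt-m\big)e_\theta$, which tends to $\frac{\beta(\theta)-m}{r}e_\theta$ with $\beta(\theta)=\int_0^\infty B(t,\theta)\,t\,dt$. Only the angular mean of $\beta$ equals $m$. So in general (e.g.\ for a bump of flux $2\pi$ centred off the origin) the difference is of exact order $1/r$, not $\mathcal{O}(r^{1-\nu})$. You must additionally gauge away $g(\theta)=\int_0^\theta(\beta(s)-m)\,ds$, which is periodic precisely because the total flux is $2\pi m$; this is the role of the term $r\int_0^\theta a(r,s)\,ds$ in the paper's phase. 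After that the residual is $-\frac1r\int_r^\infty B(t,\theta)\,t\,dt\,e_\theta=\mathcal{O}(r^{1-\nu})$, so your rate does hold in the corrected gauge.

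Second, for the Newtonian potential the multipole expansion gives $\nabla h+m\,x/|x|^2=\mathcal{O}(r^{1-\nu}+r^{-2}\log r)$. When $\nu>3$ the dipole term is of order $r^{-2}$, which is larger than $r^{1-\nu}$. Either bound is $\mathcal{O}(r^{-1-\varepsilon})$, and that is all you actually use, namely finiteness of $\int_{|x|>R_0}|\tilde A|^2(\log r)^\alpha\,dx$.

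Finally, two small points. The fixed test function on $\ball_{R_0}$ must not carry $e^{im\theta}$, which is singular at the origin when $m\neq0$; a real cutoff in the Newtonian gauge, where $|A|\in L^q_{\rm loc}$, suffices. And your ``suitable'' truncation at $R^2$ must be the logarithmic one from your first step, since a linear cutoff on $[R^2,2R^2]$ costs energy of order $(\log R)^{\alpha}$.
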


\begin{proof} 
To prove \eqref{log-weight} we use the Poincar\'e gauge
\begin{equation} \label{poincare}
A(x) = (-x_2, x_1) \int_0^1 B(tx)\,  t\, dt \, .
\end{equation}
We then have 
 \begin{equation}  \label{polar-coord}
Q_A[u]  =  \int_0^\infty\! \!\!\int_0^{2\pi} \Big[ \, |\partial_r u|^2 +r^{-2} |(i\partial_\theta + r a(r,\theta)) u|^2 \Big]\, r\, dr d\theta,
\end{equation}
where,
\begin{equation} \label{azi}
a(r,\theta) = \frac 1r \int_0^r B(t,\theta)\, t\, dt
\end{equation}
denotes the azimuthal component of the vector potential. 
We define
\begin{equation} \label{fi-r}
\phi(r, \theta)= \theta(m-\alpha(r)) +r\!\!\int_0^\theta a(r,s) ds \,. 
\end{equation}
and construct a  sequence of  test functions $u_n$ as follows;
\begin{equation} \label{test-f}
u_n(r,\theta) = e^{ i\phi(r,\theta)}\, (\log_+ r)^{\frac\alpha 2} \ \dfrac{\log_+(n/r)}{\log n}
\end{equation}
Then $u_n \in H^1(\R^2)$ and the support of $u_n$ is compact. Hence, by Theorem \ref{thm-weidl}(1),  $u_n \in d(Q_A)$. Moreover,
\begin{equation}\label{Qn}
\limsup_{n\to\infty} Q_A[u_n] < \infty.
\end{equation}
Indeed, the radial derivative part of $Q_A[u_n]$ gives
\begin{align*}
|\partial_r u_n(x)|^2& = \dfrac{1}{(\log n)^2} \left| \partial_r \left[ (\log r)^{\frac\alpha 2} (\log(n/r)) \right] \right|^2\\[2pt]
& \le \dfrac{2}{(\log n)^2} \left( (\partial_r (\log r)^{\frac\alpha 2})^2 (\log(n/r))^2 + (\log r)^{\alpha} (\partial_r \log(n/r))^2 \right)\\[2pt]
& \leq \dfrac{2}{(\log n)^2\, r^2} \left[ (\log n)^2 (\log r)^{\alpha-2} + (\log r)^{\alpha} \right],
\end{align*}
see \eqref{test-f}. 
Hence, keeping in mind that $\alpha <1$ we obtain
\begin{align*}
\limsup_{n\to\infty}\int_0^\infty  |\partial_r u_n|^2\, r dr&=  \limsup_{n\to\infty} \int_1^n |\partial_r u_n|^2\, r dr  < \infty.
\end{align*}
As for the azimuthal derivative, we compute 
$$
 |(i\partial_\theta + r a(r,\theta)) u_n|^2 = (m-\alpha(r))^2\,  |u_n|^2\, .
$$
By equation \eqref{flux} and assumptions on $B$,
\begin{align*} 
\big |m-\alpha(r) \big | & \leq  \frac{1}{2\pi}\int_r^\infty\!\! \int_0^{2\pi}  |  B(t,\theta) |  \, t\, dt d\theta \, \leq \, \frac{C}{(1+r)^{\nu-2}}\, 
\end{align*}
with a constant $C$ independent of $n$. Since $\alpha<1$, this gives
\begin{align*}
\limsup_{n\to\infty}\int_0^\infty  r^{-2}  |(i\partial_\theta + r a(r,\theta)) u_n|^2\, r dr  &\leq C^2\, 
\limsup_{n\to\infty}\int_1^n  \frac{(\log r)^\alpha}{r (1+|\log r|)^2}\,  dr  < \infty\, .
\end{align*}
Hence we have proved \eqref{Qn}.
On the other hand, for any $r$ and $\theta$ we have $|u_n(r,\theta)| \leq |u_{n+1}(r,\theta)|$. Moreover, 
\begin{equation} 
\lim_{n\to\infty} |u_n(r,\theta)|^2 =  (\log_+ r)^\alpha 
\end{equation}
pointwise in $\R_+\times[0,2\pi)$. Equation \eqref{log-weight} now follows from monotone convergence by applying
inequality \eqref{hypot} to $u=u_n$ and letting $n\to\infty$.
\end{proof}

\section{\bf Singular magnetic fields}
\label{sec-singular}

\subsubsection*{\bf The set up}

We use again the Poincar\'e gauge \eqref{poincare}.
The quadratic form \eqref{eq-form} is then closed on the domain \eqref{domain-s}. As in the proof of Proposition \ref{prop-log-global} we
write $Q_A[u]$  in the polar coordinates, see equation \eqref{polar-coord}.  Following \cite{lw-1999} we introduce, for any fixed $r>0$, the self-adjoint operator $K_r$ acting in $H^1([0,2\pi))$ as
\begin{equation} 
K_r v = i\partial_\theta v +r a v,
\end{equation} 
with $a=a(r,\theta)$ defined in \eqref{azi}. As pointed out in  \cite{lw-1999}, the eigenvalues of $K_r$ are given by 
\begin{equation} \label{lambda-m}
\lambda_m(r)  = m-\alpha(r), \qquad m\in\Z.
\end{equation}
Recall that $\alpha$ is defined in  \eqref{alpfa-r}. Indeed, from the Stokes theorem and \eqref{alpfa-r} we infer that 
$$
\alpha(r) = \frac{r}{2\pi} \int_0^{2\pi} a(r,\theta) d\theta\, .
$$
Moreover, since the spectrum of $K_r$ is purely discrete, its normalized eigenfunctions, given by
\begin{equation} \label{ef-K} 
v_m(r,\theta)= \frac{1}{\sqrt{2\pi}}\ e^{-i \big( \theta \lambda_m(r) -r\!\!\int_0^\theta a(r,s) ds\big)}\, , \qquad m\in\Z
\end{equation} 
form a complete orthonormal basis of $L^2([0,2\pi))$. For a given $u\in d(Q_A)$, we can thus expand $u(r, \cdot)$ into the Fourier series with respect to this basis to 
obtain
\begin{equation} \label{F-series}
u(r,\theta) = \sum_{m\in\Z} u_m(r) v_m(r,\theta), \qquad  u_m(r) = \int_0^{2\pi} u(r,\theta)\, \overline{v}_m(r,\theta)\, d\theta\, .
\end{equation}
The Parseval identity then gives 
\begin{equation} \label{parseval-u} 
 \int_0^{2\pi} |u(r,\theta)|^2\, d\theta = \sum_{m\in\Z} |u_m(r)|^2  \qquad \forall \, r>0.
\end{equation} 
 Hence, by \eqref{polar-coord},
\begin{equation} \label{parseval-Q} 
 Q_A[u] =  \int_0^\infty \!\!\int_0^{2\pi} |\partial_r u|^2\, r\, dr d\theta +  \int_0^\infty \sum_{m\in\Z}  r^{-2}\, \lambda^2_m(r)\,  |u_m(r)|^2 \, r\, dr.
\end{equation} 

Following the decomposition \eqref{B-decomp} we now split the vector potential accordingly as follows; 
\begin{equation} \label{A-decomp}
 A=  A_s + A_r, \qquad A_{s,r}(x) =  (-x_2, x_1) \int_0^1 B_{s,r}(tx)\,  t\, dt \, .
\end{equation}
Observe that by Lemma \ref{lem-A-regular}, there exits $q>2$ such that
\begin{equation} \label{Ar-q}
A_r\in L^q_{\rm loc}(\R^2) \, .
\end{equation}

Since $B\in L^1_{\rm loc}(\R^2)$ we have $\alpha(r) \to 0$ as $r\to 0$. Hence there exists 
$\delta>0$ such that 
\begin{equation} \label{delta-cond} 
|\alpha(r)| \, \leq\, \frac 14 \qquad \forall\, r \leq 2\delta.
\end{equation}
Let $\chi: \R^2\to [0,1]$ be a  smooth radial function such that
\begin{equation} \label{chi}
\chi(r) = 
\begin{cases}
1 & \text{ if} \quad r \leq \delta \\[3pt]
 0 & \text{ if} \quad  r \geq 2\delta
	\end{cases}
\end{equation}

We have

\begin{lemma} \label{lem-aux} 
Let $A$ be given by \eqref{A-decomp}, and let $\chi$ be as above.
Then, for
any $u\in C_0^\infty(\R^2\setminus\{0\})$,
\begin{equation}  \label{chi-2-sided}
 \frac 12 \| \nabla (\chi u)\|_2^2 + \frac 12 \int_{\R^2} \Phi^2 |\chi u|^2\, dx\, \leq\, \| (i\nabla +A) (\chi u)\|_2^2 \, \leq \,
2 \| \nabla (\chi u)\|_2^2 +2 \int_{\R^2} \Phi^2 |\chi u|^2\, dx\, .
\end{equation} 
\end{lemma}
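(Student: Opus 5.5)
The plan is to work in polar coordinates with the representation \eqref{parseval-Q} of the form, and to reduce \eqref{chi-2-sided} to an elementary two-sided inequality between the eigenvalues $\lambda_m(r)=m-\alpha(r)$ and $m^2+\alpha(r)^2$. The inequality holds with exactly the constants $\frac12$ and $2$ because of \eqref{delta-cond}. Fix $u\in C_0^\infty(\R^2\setminus\{0\})$ and put $w=\chi u$. Then $w$ is supported in $\{r\le 2\delta\}$, so $|\alpha(r)|\le\frac14$ on its support. By \eqref{parseval-Q},
\begin{equation*}
\|(i\nabla+A)w\|_2^2=\int_0^\infty\!\!\int_0^{2\pi}|\partial_r w|^2\,r\,dr\,d\theta+\int_0^\infty\sum_{m\in\Z} r^{-2}(m-\alpha(r))^2|w_m(r)|^2\,r\,dr .
\end{equation*}
The radial parts of $\|(i\nabla+A)w\|_2^2$ and of $\|\nabla w\|_2^2$ coincide, because the Poincar\'e gauge \eqref{poincare} has no radial component. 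It therefore suffices to compare the angular parts pointwise in $r$.

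The core step is the scalar inequality, valid for $|\alpha|\le\frac14$ and every $m\in\Z$:
\begin{equation*}
\tfrac12\,(m^2+\alpha^2)\ \le\ (m-\alpha)^2\ \le\ 2\,(m^2+\alpha^2).
\end{equation*}
The upper bound is just $(a-b)^2\le 2a^2+2b^2$. For the lower bound, the case $m=0$ is trivial since $\alpha^2\ge\alpha^2/2$. For $|m|\ge1$ one computes
\begin{equation*}
(|m|-|\alpha|)^2-\tfrac12(m^2+\alpha^2)=\tfrac12 m^2-2|m||\alpha|+\tfrac12\alpha^2\ \ge\ \tfrac12 m^2-\tfrac12|m|\ \ge 0,
\end{equation*}
using $|\alpha|\le\frac14$. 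Multiplying by $r^{-2}|w_m(r)|^2$, summing over $m$, and using Parseval \eqref{parseval-u} for the $\alpha^2$ part gives
\begin{equation*}
\sum_m r^{-2}\alpha(r)^2|w_m|^2=\Phi^2(r)\int_0^{2\pi}|w(r,\theta)|^2\,d\theta .
\end{equation*}
This reproduces exactly the $\Phi^2$ terms in \eqref{chi-2-sided}, with constants $\frac12$ and $2$.

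It remains to identify $\int_0^\infty\sum_m r^{-2}m^2|w_m|^2\,r\,dr$, together with the radial term, with $\|\nabla w\|_2^2$. I would do this by writing $w=e^{i\psi}g$ with $\psi(r,\theta)=r\int_0^\theta a(r,s)\,ds-\theta\alpha(r)$, which is $2\pi$-periodic in $\theta$. Then $(i\partial_\theta+ra)w=e^{i\psi}(i\partial_\theta+\alpha)g$, and the coefficients $w_m$ are the ordinary Fourier coefficients of $g$. Consequently $\sum_m m^2|w_m|^2=\int_0^{2\pi}|\partial_\theta g|^2\,d\theta$.

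I expect the main obstacle to be this last identification. One has to check that the phase $e^{i\psi}$ does not alter the kinetic term, that is, that the contributions from $\partial_\theta\psi$ and $\partial_r\psi$ are absorbed into the magnetic term itself rather than into $\nabla w$. The paper's gauge choice \eqref{A-decomp} is designed to make this work. I would first try to carry out the comparison entirely in the $v_m$-basis, keeping the radial derivative untouched, so that only the eigenvalue inequality above is used. The whole statement is then read off with the constants $\frac12$ and $2$.
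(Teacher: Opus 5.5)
\textbf{There is a genuine gap, at exactly the step you flag, and the gauge does not close it.} Your route is the paper's own: expand $w=\chi u$ in the eigenbasis $\{v_m\}$ of $K_r$, use \eqref{parseval-Q}, and compare $(m-\alpha)^2$ with $m^2+\alpha^2$ using $|\alpha|\le\frac14$. The scalar inequality, the equality of the radial terms, and the $\Phi^2$ bookkeeping are all fine.

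The failure is in identifying $\sum_m m^2|w_m|^2$ with the angular kinetic energy of $w$. The $w_m$ are the ordinary Fourier coefficients of $g=e^{-i\psi}w$, and $\partial_\theta\psi=ra(r,\theta)-\alpha(r)$. Your own computation therefore gives
\[
\sum_{m\in\Z}m^2|w_m(r)|^2=\int_0^{2\pi}|\partial_\theta g|^2\,d\theta=\int_0^{2\pi}\big|i\partial_\theta w+\big(ra(r,\theta)-\alpha(r)\big)w\big|^2\,d\theta .
\]
This equals $\int_0^{2\pi}|\partial_\theta w|^2\,d\theta$ for all $w$ only if $ra(r,\theta)=\alpha(r)$ for every $\theta$; test with $w$ independent of $\theta$. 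That holds only if $B$ is radially symmetric on $\ball_{2\delta}$. The Poincar\'e gauge removes the radial component of $A$, but not the $\theta$-dependence of its azimuthal component $a$.

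The paper's proof asserts the same identity (``again by Parseval'') without justification, so following it does not help. What the argument actually proves is \eqref{chi-2-sided} with $\nabla$ replaced by $i\nabla+\tilde A$. Here $\tilde A(x)=A(x)-\alpha(|x|)\,(-x_2,x_1)/|x|^2$ is the part of $A$ with zero circulation on each circle $|x|=r$. If $B$ is radial on $\ball_{2\delta}$, then $\tilde A=0$ there, $\psi\equiv 0$, and your proof is complete as written.

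\textbf{The missing step cannot be supplied in general, because the statement itself can fail.} Take $B_r=0$ and, with $1<\gamma<2$,
\[
B_s(x)=r^{-2}\big(|\log r|^{-3}+|\log r|^{-\gamma}\cos\theta\big)\quad\text{for } r<e^{-1},\qquad B_s=0 \text{ otherwise}.
\]
Assumption \ref{ass-singular} holds with $\alpha(r)=\frac12|\log r|^{-2}$, and
\[
ra(r,\theta)=\alpha(r)+\frac{\cos\theta}{\gamma-1}\,|\log r|^{1-\gamma}.
\]
For radial $w=f(r)$ supported in $\ball_\delta\setminus\{0\}$ (so $\chi w=w$), one has
\[
\|(i\nabla+A)w\|_2^2=\|\nabla w\|_2^2+\|\Phi w\|_2^2+\frac{\pi}{(\gamma-1)^2}\int_0^\infty\frac{|f(r)|^2}{r\,|\log r|^{2\gamma-2}}\,dr .
\]
Choose $f(r)=\varphi(|\log r|/T)$ with $\varphi\in C_0^\infty((1,2))$. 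Then $\|\nabla w\|_2^2+\|\Phi w\|_2^2\lesssim T^{-1}$, while the last term is $\gtrsim T^{3-2\gamma}$. Hence the upper bound in \eqref{chi-2-sided} fails as $T\to\infty$, with any constant in place of $2$.

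So you need either an extra hypothesis (radial $B$ near the origin) or the corrected statement with $i\nabla+\tilde A$ in place of $\nabla$.
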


\begin{proof} 
We expand $\chi u$ into the Fourier series  with respect to the basis \eqref{ef-K}. Since $\chi$ is radial, we have $(\chi u)_m(r) = \chi(r) u_m(r)$. This implies 
\begin{equation} \label{parseval-chi-u} 
\| (i\nabla +A) (\chi u)\|_2^2 =  \int_0^\infty \!\!\int_0^{2\pi} |\partial_r (\chi u)|^2\, r\, dr d\theta + \int_0^\infty \sum_{m\in\Z}  r^{-2} |\lambda_m(r)|^2 |\chi u_m|^2 \, r\, dr\, .
\end{equation} 
By \eqref{delta-cond} we have $\alpha(r) \leq \frac 14$ on the support of $\chi$. Hence 
\begin{equation} \label{lambda-lowerb}
\lambda^2_m(r)\,  \chi^2 \, \geq\,  \frac 12 \big( m^2 +\alpha^2(r)\big) \chi^2  \qquad \forall \, r>0, \ \ \forall \, m\in\Z\, .
\end{equation}
Since
$$
\| \nabla (\chi u)\|_2^2 = \int_0^\infty \!\!\int_0^{2\pi} |\partial_r (\chi u)|^2\, r\, dr d\theta+  \int_0^\infty \sum_{m\in\Z}  r^{-2}  m^2 |\chi u_m|^2 \, r\, dr,
$$
again by Parseval, inserting \eqref{lambda-lowerb} into \eqref{parseval-chi-u} proves the lower bound in \eqref{chi-2-sided}. The upper bound follows in the 
same way since $\lambda_m^2(r) \leq 2 m^2 + 2 \alpha^2(r)$.
\end{proof}

 \begin{remark} \label{rem-bs}
 In view of the decomposition  \eqref{B-decomp}  we may assume, without loss of generality,  that $B_s$ is compactly supported.
 \end{remark}

\begin{lemma} \label{lem-two-sided} 
Let $B$ satisfy Assumption \ref{ass-singular} and let $A$ be as above. Let $\chi$ be as in Lemma \ref{lem-aux}. Then for any $u\in C_0^\infty(\R^2 \setminus\{0\})$ 
\begin{equation} \label{equiv-singular}
\vvvert{u}^2_A \ \asymp   \  \| (i\nabla +A_r) u\|^2_2  + \|  \chi	 \Phi  u\|_2^2 +  \|u\|^2_2\, .
\end{equation} 
\end{lemma}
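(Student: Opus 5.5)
We prove the two inequalities in \eqref{equiv-singular} by splitting $u$ with the cutoff $\chi$. Write $u = \chi u + (1-\chi) u$ and use $\|(i\nabla+A)u\|_2^2 \le 2\|(i\nabla+A)(\chi u)\|_2^2 + 2\|(i\nabla+A)((1-\chi)u)\|_2^2$, and the analogous bound for $A_r$. This reduces the claim to two regions: the ball $\ball_{2\delta}$, where the singular part dominates, and the exterior region $\{r\ge\delta\}$. By Remark \ref{rem-bs}, $B_s$ is bounded away from the origin, and there $A_s$ is a bounded perturbation. The commutator terms $|\nabla\chi|\,|u|$ are controlled by $\|u\|_2$ because $\chi$ is smooth.

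\emph{Near the origin.} By Lemma \ref{lem-aux}, $\|(i\nabla+A)(\chi u)\|_2^2 \asymp \|\nabla(\chi u)\|_2^2 + \|\Phi\chi u\|_2^2$. We then compare $\|\nabla(\chi u)\|_2^2$ with $\|(i\nabla+A_r)(\chi u)\|_2^2$. Since $\chi u$ is supported in $\ball_{2\delta}$ and $A_r\in L^q_{\rm loc}$ with $q>2$ by \eqref{Ar-q}, Theorem \ref{thm-weidl}(1) applied with $U=\ball_{3\delta}$ and $\Omega=U$ gives $\|(i\nabla+A_r)(\chi u)\|_2^2+\|\chi u\|_2^2 \asymp \|\chi u\|_{H^1}^2$. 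Adding $\|u\|_2^2$ on both sides, the localized parts of the two sides of \eqref{equiv-singular} are therefore equivalent.

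\emph{Away from the origin.} On $\mathrm{supp}(1-\chi)\subset\{r\ge\delta\}$ we have $A=A_r+A_s$, with $A_s$ bounded. Indeed, $|A_s(x)| \le \frac{1}{r}\int_0^r |B_s(t,\theta)|\,t\,dt$ in polar form, and this equals at most $|\alpha_s|$-type flux terms divided by $r$ plus bounded contributions. More carefully, the azimuthal component is $a_s(r,\theta)=\frac1r\int_0^r B_s(t,\theta)t\,dt$, and for $r\ge\delta$ this is bounded provided the angular integrals $\int_0^\delta |B_s(t,\theta)|\,t\,dt$ are uniformly bounded in $\theta$. This is the \textbf{main obstacle}, since $B_s\in L^1$ alone does not guarantee it. To avoid needing pointwise control, we would instead compare $A$ and $A_r$ on $\{r\ge\delta\}$ through a gauge: on the annulus region, $A_s$ differs from $\alpha_s(r)$-type Aharonov--Bohm potentials by a gradient plus a bounded field. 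Concretely, $B_s=0$ can be arranged near the origin only up to flux, so $A_s = \alpha_s(\delta)\,\frac{(-x_2,x_1)}{r^2} + \nabla\psi + \tilde A$, with $\tilde A$ bounded on $\{r\ge\delta\}$. The AB term is bounded there, and the gradient term is removed by a gauge factor $e^{i\psi}$, which is harmless because both sides of \eqref{equiv-singular} for the exterior part involve only $|\cdot|$-quantities after the cutoff. The cleaner alternative is to argue directly in the Fourier basis \eqref{ef-K}. For $r\ge\delta$ the eigenvalues $\lambda_m(r)$ and $\lambda_m^{(r)}(r)=m-\alpha_r(r)$ differ by $\alpha_s(r)$, which is bounded for $r\ge\delta$. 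This gives $\lambda_m^2 \le 2(\lambda_m^{(r)})^2 + 2\alpha_s^2$ and vice versa. However, the bases for $A$ and $A_r$ differ by the phase $e^{i r\int_0^\theta a_s}$, whose radial derivative must also be controlled, and we expect this to require $B_s$ bounded on $\{\delta\le r\le R\}$ together with compact support.

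Once the exterior equivalence $\|(i\nabla+A)((1-\chi)u)\|_2^2+\|u\|_2^2 \asymp \|(i\nabla+A_r)((1-\chi)u)\|_2^2+\|u\|_2^2$ is established, combining it with the interior estimate and absorbing the cross terms into $\|u\|_2^2$ yields \eqref{equiv-singular}. The term $\|\chi\Phi u\|_2^2$ appears only from the interior part.
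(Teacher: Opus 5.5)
Your treatment near the origin is sound. Applying \eqref{equiv-1} directly to $\chi u$ on $\ball_{3\delta}$ is even a little more economical than the paper's $\eps$-absorption via the compact embedding. Replacing the IMS identity by the splitting $u=\chi u+(1-\chi)u$ only costs constants and the term $\|u\nabla\chi\|_2^2$.

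The genuine gap is the exterior step. You never prove $\|(i\nabla+A)((1-\chi)u)\|_2^2+\|u\|_2^2\asymp\|(i\nabla+A_r)((1-\chi)u)\|_2^2+\|u\|_2^2$, and your conclusion is explicitly conditional on it. The paper closes exactly this step with \eqref{zeta-as}, i.e.\ $A_s\in L^\infty(\{r\ge\delta\})$. After that, $|(i\nabla+A)v|\le|(i\nabla+A_r)v|+\|\zeta A_s\|_\infty|v|$ and the reverse inequality make the exterior part immediate.

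Your gauge workaround cannot replace this. Both sides of \eqref{equiv-singular} are evaluated on the same $u$ with the fixed Poincar\'e potentials $A$ and $A_r$. A factor $e^{i\psi}$ only moves $\nabla\psi$ from the $A$-side to the $A_r$-side: these are not $|\cdot|$-quantities, so a gradient component of $A_s$ cannot be discarded. The Fourier-basis route runs into the same phase, as you yourself note.

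Your suspicion is nevertheless well founded, and it pays to state exactly what is needed. For $x=r\omega$ with $r\ge\delta$ one has $A_s(x)=\frac{\omega^\perp}{r}\int_0^r B_s(t\omega)\,t\,dt$. Boundedness of $B_s$ on compact subsets of $\R^2\setminus\{0\}$ together with compact support (Remark \ref{rem-bs}) controls only the piece $\int_\delta^r$. The inner piece is bounded only if $\omega\mapsto\int_0^\delta B_s(t\omega)\,t\,dt$ is bounded, e.g.\ if $\sup_\omega\int_0^\delta|B_s(t\omega)|\,t\,dt<\infty$. For radial $B_s$ this quantity equals $\frac{1}{2\pi}\|B_s\|_{L^1(\ball_\delta)}$, so the examples of Section \ref{sec-examples} are fine.

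It does not, however, follow from Assumption \ref{ass-singular}. Add to \eqref{ex-1} (with $b_0>0$) nonnegative bounded bumps of radius $4^{-k}$ and mass $\sim 2^{-k/2}$ centred at $2^{-k}\omega_0$. This gives $\int_0^\delta B_s(t\omega)\,t\,dt\gtrsim|\omega-\omega_0|^{-1/2}$ near $\omega_0$, which is not in $L^2(\Sph)$. Hence $A_s\notin L^2(\{\delta<r<2\delta\})$. Then \eqref{equiv-singular} fails for any $u\in C_0^\infty(\{\delta<r<2\delta\})$ equal to $1$ on a smaller annulus: the left side is infinite and the right side finite.

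So the paper's one-line derivation of \eqref{zeta-as} tacitly uses such a ray condition. To complete your proof, assume it explicitly (or take $B_s$ radial) and finish the exterior with the bounded-perturbation estimate above. Do not expect it to follow from boundedness of $B_s$ on $\{\delta\le r\le R\}$.
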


Recall that the $\vvvert{u}^2_A $ is defined in \eqref{vvv}.

\begin{proof}
Below we denote by $C$ a generic positive number whose value may change
line to line.
Let $u\in C_0^\infty(\R^2\setminus\{0\})$. We use a version of the IMS localization formula similarly as it was done in the proof of \cite[Prop.~1.1]{cf}, 
see also \cite[Sec.~2.1]{bcf}. 
Let $\zeta: \R^2\to [0,1]$
satisfy $\chi^2+\zeta^2 =1$.
 From Assumption \ref{ass-singular} and Remark \ref{rem-bs}  it follows that 
\begin{equation} \label{zeta-as} 
|\zeta A_s| \in L^\infty(\R^2).
\end{equation} 
Using the fact that $\chi \nabla \chi +\zeta\nabla \zeta=0$ we deduce the identity
\begin{equation} \label{ims}
| (i\nabla +A) u|^2 = | (i\nabla +A) (\chi u)|^2+| (i\nabla +A) (\zeta u)|^2 - |u\nabla \chi|^2 -|u \nabla \zeta|^2\, .
\end{equation}
Since $\chi$ and $\zeta$ are smooth, this combined with  \eqref{zeta-as}, Lemma \ref{lem-aux} and the elementary inequality $|a+b|^2 \leq 2 |a|^2+2 |b|^2$ implies 
\begin{align*} 
 \| (i\nabla +A) u\|_2^2 & \, \leq  4 \| \nabla (\chi u)\|_2^2 + 4  \|  \chi	 \Phi  u\|_2^2 + 4  \| (i\nabla +A_r)\zeta u\|_2^2+ 4\| \zeta A_s u\|_2^2 + C \|u\|_2^2\\[2pt]
 & \leq  8 \| \chi \nabla  u\|_2^2 + 8  \|  \chi	 \Phi  u\|_2^2 + 4  \|\zeta (i\nabla +A_r) u\|_2^2+ C \|u\|_2^2\\[2pt]
 &\leq 8 \| \chi \nabla  u\|_2^2 +  4 \|  \chi	 \Phi  u\|_2^2 + 8  \| (i\nabla +A_r) u\|_2^2+ C \|u\|_2^2\, .
\end{align*}
From Theorem \ref{thm-weidl}(1) and \eqref{Ar-q} it follows that that 
$$
\| \chi \nabla  u\|_2^2  \, \lesssim\ \| (i\nabla +A_r)  u\|_2^2  + \|u\|_2^2.
$$
This completes the proof of the upper bound in \eqref{equiv-singular}.

To prove the lower bound  we use again  Lemma \ref{lem-aux} and the inequality
\begin{equation} \label{eta}
|a+b|^2 \geq (1-\eta) |a|^2 +(1-\eta^{-1}) |b|^2 \qquad \forall\,  a,b\in\R^2, \ \eta\in (0,1).
\end{equation}
From  \eqref{ims}  and \eqref{zeta-as} we then deduce that
\begin{align}	\label{aux-2}  
 \| (i\nabla +A) u\|_2^2  &\, \geq\,   \frac 12 \| \nabla (\chi u)\|_2^2 + \frac 12   \|  \chi	 \Phi  u\|_2^2 +(1-\eta)^2 \|\zeta (i \nabla+A_r)  u\|_2^2- C  \|u\|_2^2 .
\end{align}
Applying \eqref{eta} a couple of more times we further get 
\begin{align*}
 \| \nabla (\chi u)\|_2^2  &\, \geq\,  
(1-\eta)\,  \|\chi  \nabla  u\|_2^2 - C  \|u\|_2^2\\[4pt] 
 & \geq  (1-\eta)^2  \|\chi (i \nabla+A_r)  u\|_2^2 - C  \|u\|_2^2  -  (1-\eta^{-1}) (1-\eta) \| \chi A_r  u\|^2\, .
\end{align*}
To control the last term we notice that, by
the compactness of the imbedding  $H^1(\ball_{2\delta}) \hookrightarrow L^s(\ball_{2\delta}),\ s > 2$,
 for any $\eps>0$ there exists a constant $C_\eps$ such that 
$$
\| \chi u\|^2_{L^s(\ball_R)} \leq \eps  \| \nabla ( \chi u)\|_2^2 + C_\eps  \|\chi u\|_2^2
$$
This  combined with the H\"older inequality, \eqref{Ar-q} and 
Theorem \ref{thm-weidl}(1)  gives
\begin{align*} 
\| \chi A_r  u\|_2^2 & \leq\,  \eps\,  \|A_r\|_{L^q(\ball_{2\delta})}\,  \| \nabla  (\chi u)\|_2^2 + C_\eps  \|u\|_2^2  
\leq \eps\, C   \| (i\nabla +A_r) (\chi u)\|_2^2 + \widetilde C_\eps   \|u\|_2^2  \\[4pt]
& \leq  2 \eps\, C   \| \chi(i\nabla +A_r)  u\|_2^2 + 2 \eps\, C \|\nabla \chi\|^2_\infty  \|u\|_2^2+ \widetilde C_\eps   \|u\|_2^2\, .
\end{align*}
Taking $\eps$  small enough we thus deduce that
\begin{align*}
 \| \nabla (\chi u)\|_2^2 
 & \, \geq \,  c  \|\chi (i \nabla+A_r)  u\|_2^2 - C  \|u\|_2^2 .
\end{align*}
with some $c>0$. 
In view of  \eqref{aux-2} this implies 
\begin{align}
\vvvert{u}^2_A  \,  \gtrsim \,   \| (i\nabla +A_r)  u\|_2^2 + \|\Phi  \chi u\|_2^2 +   \|u\|_2^2
\end{align}
as claimed.
\end{proof}

\begin{corollary} \label{cor-domain} 
Let $B$ satisfy Assumption \ref{ass-singular} and let $A$ be given by \eqref{A-decomp}. 
Then the form domain  \eqref{domain-s} satisfies equation \eqref{f-domain}. 
Moreover, if
\begin{equation} \label{weak} 
\limsup_{r\to 0} |\, \alpha(r) \log r| < \infty , 
\end{equation}
then $d(Q_A)=d_0(Q_{A_r})$.

\end{corollary}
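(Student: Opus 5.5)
The plan is to read off both claims from the two-sided norm equivalence of Lemma \ref{lem-two-sided}, together with a density argument.

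\emph{First claim, equation \eqref{f-domain}.} By Lemma \ref{lem-two-sided}, the norm $\vvvert{\cdot}_A$ is equivalent on $C_0^\infty(\R^2\setminus\{0\})$ to
\[
N(u)^2 := \vvvert{u}_{A_r}^2 + \|\chi\Phi u\|_2^2 .
\]
Hence $d(Q_A)$ coincides with the closure of $C_0^\infty(\R^2\setminus\{0\})$ in $N$. Since $\chi\Phi$ is bounded away from the origin, the condition $\chi\Phi u\in L^2$ is the same as $|\Phi u|\in L^2_{\rm loc}(\R^2)$.

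For the inclusion $\subseteq$, take an $N$-Cauchy sequence. It converges in $d_0(Q_{A_r})$ to some $u$. Along a subsequence it converges a.e., so Fatou gives $\chi\Phi u\in L^2$.

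For the inclusion $\supseteq$, take $u\in d_0(Q_{A_r})$ with $\chi\Phi u\in L^2$. I approximate in three steps.
\begin{enumerate}
\item Truncate $u$ at infinity by smooth cutoffs; this is standard.
\item Cut out the origin with $u_k=(1-\psi_k)u$, where $\psi_k$ is a logarithmic cutoff equal to $1$ on $\ball_{e^{-k^2}}$, vanishing outside $\ball_{e^{-k}}$, with $\|\nabla\psi_k\|_2\to0$. Then $\|\chi\Phi(u-u_k)\|_2\to0$ by dominated convergence.
\item Mollify.
\end{enumerate}

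\emph{Main obstacle.} The critical term in step 2 is $\|u\nabla\psi_k\|_2$. In the $A_r$-part of $N$, $(i\nabla+A_r)u_k-(1-\psi_k)(i\nabla+A_r)u = -iu\nabla\psi_k$. The term $\|\psi_k(i\nabla+A_r)u\|_2$ tends to $0$ by dominated convergence. For the remaining one, write $\nabla\psi_k=\psi_k'(r)\hat x$ with $|\psi_k'|\lesssim \frac{1}{k r|\log r|}$, supported on $e^{-k^2}\le r\le e^{-k}$. This gives
\[
\|u\nabla\psi_k\|_2^2\lesssim k^{-2}\int_{\ball_{e^{-k}}}\frac{|u|^2}{r^2\log^2 r}\,dx .
\]
Applying Lemma \ref{lem-equiv-2} to $|u|$ (using the diamagnetic inequality \eqref{kato} and Theorem \ref{thm-weidl}(1) to place $|u|$ in $H^1_{\rm loc}$), the integral is finite, so the bound tends to $0$.

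Because $\Phi$ is unbounded near $0$, the approximants must be built to converge in the $\chi\Phi$-term as well. The cutoff approach handles this, since $|u_k|\le|u|$ pointwise. The mollification in step 3 is harmless because $u_k$ vanishes near $0$, where $\Phi$ is bounded.

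\emph{Second claim, under \eqref{weak}.} I will show $d(Q_A)=d_0(Q_{A_r})$. Condition \eqref{weak} gives $|\alpha(r)|\lesssim 1/|\log r|$ near $0$. Hence $\Phi^2\lesssim \frac{1}{r^2\log^2 r}$ on the support of $\chi$, shrinking $\delta$ if necessary. Lemma \ref{lem-equiv-2}, applied to $|u|$ as above, then yields
\[
\|\chi\Phi u\|_2\lesssim \| |u| \|_{H^1(\ball_{2\delta})}\lesssim \vvvert{u}_{A_r}.
\]
So the condition $|\Phi u|\in L^2_{\rm loc}$ holds automatically for every $u\in d_0(Q_{A_r})$, and \eqref{f-domain} reduces to $d(Q_A)=d_0(Q_{A_r})$.
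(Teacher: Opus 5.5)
Your proposal follows the paper's route. Both claims come from the norm equivalence in Lemma \ref{lem-two-sided} plus removability of the origin. For the second claim you use, as the paper does, Lemma \ref{lem-equiv-2} with the diamagnetic inequality to get $\|\chi\Phi u\|_2\lesssim\vvvert{u}_{A_r}$.

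Your step 2 is in fact more explicit than the paper, which only cites \eqref{no-zero}. That identity gives approximation in $\vvvert{\cdot}_{A_r}$ alone. It is your choice $u_k=(1-\psi_k)u$, with $|u_k|\le|u|$, that also makes the unbounded term $\|\chi\Phi(u-u_k)\|_2$ go to zero.

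One slip in step 2 needs fixing. A function equal to $1$ at $r=e^{-k^2}$ and $0$ at $r=e^{-k}$ cannot satisfy $|\psi_k'|\lesssim \frac{1}{k\,r|\log r|}$ on that annulus, since
\[
\int_{e^{-k^2}}^{e^{-k}}\frac{dr}{k\,r|\log r|}=\frac{\log k}{k}\longrightarrow 0 .
\]
Two easy repairs are available:
\begin{itemize}
\item Use $\psi_k(r)=\frac{\log|\log r|-\log k}{\log k}$ on that annulus. This gives $|\psi_k'|=\frac{1}{\log k\; r|\log r|}$, and your prefactor becomes $(\log k)^{-2}$ instead of $k^{-2}$.
\item Drop the decaying prefactor altogether, e.g.\ with a cutoff between $e^{-2k}$ and $e^{-k}$ satisfying $|\psi_k'|\lesssim\frac{1}{r|\log r|}$. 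Then use that $\int_{\ball_{e^{-k}}}\frac{|u|^2}{r^2\log^2 r}\,dx\to 0$, as the tail of an integral that is finite by Lemma \ref{lem-equiv-2}.
\end{itemize}
With either correction your argument is complete.
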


\begin{proof} 
Since $\Phi$ is bounded on compact subsets of $\R^2\setminus\{0\}$, see \eqref{alpfa-r} and \eqref{Phi}, $ | \Phi u| \in L_{\rm loc}^2(\R^2)$ if and only 
if  $|\chi \Phi  u| \in L^2(\R^2)$. 
Now the first claim follows from Lemma \ref{lem-two-sided} and the fact that, in view of Theorem \ref{thm-weidl}(1), 
\begin{equation} \label{no-zero}
\overline{C_0^\infty(\R^2\setminus\{0\})}^{\, \vvvert{ \cdot}_{A_r}}\, =\, \overline{C_0^\infty(\R^2)}^{\, \vvvert{ \cdot}_{A_r}}\, .
\end{equation}
To prove the second claim note that if \eqref{weak} is satisfied, then by Lemma \ref{lem-equiv-2} and  Theorem \ref{thm-weidl}(1)  we have
$$
\|\chi \Phi  u\|_2^2 \, \lesssim\,  \| (i\nabla +A_r)  u\|_2^2 + \| u\|_2^2 .
$$
Lemma  \ref{lem-two-sided} then implies
$$
\vvvert{u}_A \, \asymp \vvvert{u}_{A_r} \, ,
$$
and the claim follows from equation \eqref{no-zero}.
\end{proof}

We are now in position we prove Theorem \ref{thm-singular-local}.

\begin{proof}[\bf Proof of Theorem \ref{thm-singular-local}] 
We put $\eta=\delta$ with $\delta$ given by \eqref{delta-cond}. 
By \eqref{domain-s}  and Lemma \ref{lem-two-sided} it suffices to prove the inequality for any $u\in C_0^\infty(\R^2\setminus\{0\})$.
From \eqref{parseval-Q}, \eqref{parseval-u} and \eqref{lambda-lowerb} we deduce that 
$$
Q_A[u] \, \geq\, \frac 12 \int_{\ball_\delta} \Phi^2\, |u|^2\, dx\, . \\[2pt]
$$
In view of \eqref{bs-diverge} and the fact that $B_r\in L^p_{\rm loc}(\R^2)$ with $p>1$,  this further implies $Q_A[u] \, \gtrsim \int_{\ball_\delta} |u|^2\, dx$. We can thus mimic the proof of Theorem \ref{thm-hardy} and deduce
from the diamagnetic inequality \eqref{kato} and Lemma \ref{lem-equiv-2}, applied with $\Omega=\ball_\delta$, that 
$$
Q_A[u] \, \gtrsim \int_{\ball_\delta} \rho_0\, |u|^2\, dx\, . 
$$
Summing up the two bounds and extending the integral weight to all of $\R^2$ as in the proof of Theorem \ref{thm-hardy} completes the proof. 
\end{proof}

\subsection{Examples} We conclude this section with a couple of examples.
\label{sec-examples}

\subsubsection*{Example 1}
Consider the singular magnetic field  
\begin{equation} \label{ex-1}
B_s(x)= B_s(r) = 
 \left\{
\begin{array}{l@{\quad}cr}
\displaystyle{\frac{b_0}{r^2  |\log r|^{\gamma}}} &  \, \text{if }  & r \leq e^{-1} \\[10pt]
0        & \,  \text{if}  & e^{-1} <r
\end{array}
\right. , \qquad \gamma >1
\end{equation}
where  $b_0\neq 0$ is a constant. The condition $\gamma>1$ assures that $B_s\in L^1_{\rm loc}(\R^2)$. 
A short calculation shows that 
$$
 \int_{\ball_r} B_s\, dx = \frac{2\pi\, b_0}{\gamma-1} \, |\log r|^{1-\gamma} \qquad r \leq e^{-1}\, .
$$
Condition \eqref{bs-diverge} is thus satisfied and Theorem \ref{thm-singular-local} gives 
$$
\rho(r) \, = \,  r^{-2} \Big ( |\log r|^{-2} + \frac{b_0^2}{(\gamma-1)^2}  |\log r|^{ 2- 2\gamma}\Big), \qquad   r\leq e^{-1}
$$ 
Hence $\rho \asymp \rho_0$ as long as $\gamma\geq 2$. In fact, by Corollary \ref{cor-domain}, 
we have $d(Q_A)=d_0(Q_{A_r})$ for all $\gamma\geq 2$.

Notice also that if $\gamma\leq  \frac 32$, then the singularity of $\rho$ in the origin is not integrable. Accordingly, 
by inequality \eqref{hardy-singular-local},
$$
\forall\, u \in d(Q_A): \quad \liminf_{x\to 0} |u(x)| = 0 \qquad \text{for} \quad \gamma \leq \frac 32.
$$

\subsubsection*{Example 2}
Let  
\begin{equation} \label{ex-2}
B_s(x)= B_s(r) = 
 \left\{
\begin{array}{l@{\quad}cr}
\displaystyle{\frac{b_0}{r^2\, |\log r|  \big(\log |\log r|\big) ^{\gamma}}} &  \, \text{if }  & r \leq e^{-2} \\[10pt]
0        & \,  \text{if}  & e^{-2} <r
\end{array}
\right. , \qquad \gamma >1, \  \ b_0\neq 0. 
\end{equation}
Then we have again $B_s\in L^1_{\rm loc}(\R^2)$ and Theorem \ref{thm-singular-local} implies
$$
\rho(r) = \frac{b_0^2}{(\gamma-1)^2\, r^2} \big(\log |\log r|\big)^{1- \gamma} \, \big(1+o(1)\big)   \qquad r\to 0.
$$

\section{\bf Eigenvalues of magnetic Schr\"odinger operators }
\label{sec-application}
In this section we consider an application of inequality \eqref{hardy-eq} to spectral estimates for a magnetic 
 Schr\"odinger operator 
\begin{equation} \label{hamiltonian} 
(i\nabla +A)^2 -V
\end{equation} 
in $L^2(\R^2)$. Here $V:\R^2\to\R$ is an electric potential. We denote by $N((i\nabla +A)^2 -V)$ 
 the number of negative eigenvalues, counted with their multiplicities, of the operator \eqref{hamiltonian}.  
 To motivate our result let us recall some known upper bounds on $N((i\nabla +A)^2 -V)$ established earlier in the 
literature. Set
 \begin{equation} \label{V-radial}
 \V(r) := {\rm ess} \sup_{0\leq\theta\leq 2\pi} V_+(r,\theta)
 \end{equation}
 and let 
 \begin{equation} 
  L^1(\R_+, L^\infty(\Sph)) = \Big\{V\in L^1_{\rm loc}(\R^2):  \ \int_0^\infty  \V(r)\, r\,dr <\infty \Big\}.
 \end{equation}

In \cite{k-jst-2011} it has been shown, under certain conditions on $B$, that
\begin{equation} \label{jst-2011}
N((i\nabla +A)^2 -V) \ \lesssim\ \int_0^\infty \V(r) (1+|\log r|)\, r\, dr, 
\end{equation}
and that for any $\alpha>0$ there exists a constant $c_\alpha$ such that
\begin{equation} \label{jst-2011-b}
N((i\nabla +A)^2 -V) \ \leq\, c_\alpha  \int_{\R^2}\, V_+^{1+\alpha} (1+|x|)^{2\alpha}\, dx\, .
\end{equation}
These bounds hold for all $V$ for which the corresponding right hand side is finite. Similar bounds have been derived in \cite{FLR} for the Aharonov-Bohm magnetic field.

However, neither \eqref{jst-2011} nor \eqref{jst-2011-b} can be applied to potentials with singularities strong enough as 
to produce a  non semi-classical behavior of the counting function 
$N((i\nabla +A)^2 -V)$ in the strong coupling limit. A typical example of such a potential is 
\begin{equation} \label{local}
V_\sigma(x) = \left\{
\begin{array}{l@{\quad}cr}
r^{-2}\, |\log r|^{-2}\, \big(\log |\log r|\big)^{-1/\sigma} & \text{if\, \,} & r<
e^{-2}  \\[3pt]
0 & \, \, \text{if \, }  & r \geq e^{-2} 
\end{array}
\right. 
\end{equation}
with $\sigma >1$, which has been invented in \cite{BS} for two-dimensional Schr\"odinger operators without magnetic field. Note 
that for any $\sigma>1$ we have $V_\sigma\in L^1(\R^2)$ while the right hand sides of  \eqref{jst-2011} and \eqref{jst-2011-b} are infinite for $V=V_\sigma$.  
In fact, using the results of  \cite{BS} 
it was proved in \cite{k-jst-2011}, \cite{BK} that, under suitable conditions on $B$, 
\begin{equation} \label{strong}
N\big( (i\nabla +A)^2 -   \lambda V_\sigma \big) \, \asymp\, \lambda^\sigma  \qquad \text{as} \quad \lambda\to\infty\, .
\end{equation}

With the help of inequality \eqref{hardy-eq} we shall prove an upper bound on $N((i\nabla +A)^2 -V)$ which is applicable also to singular potentials similar to
 \eqref{local}. We will also show that our upper bound is order sharp in the strong coupling limit, see Remark \ref{rem-singular}.
 
To state our result we need some additional notation. 
Given $a>1$ and $V\in  L^1(\R_+, L^\infty(\Sph))$ we define the functional 
\begin{equation} \label{V-ap}
[ V]_a  =\Big(  \sup_{t>0}\, t^{a} \int_{\Omega_t} \rho_0(r)\, (1+|\log r|)\, r dr \Big)^{\frac{1}{a}}\, ,
\end{equation}
where $\rho_0$ is given by \eqref{weight-rho} and where 
$$
\Omega_t = \big\{ x\in\R^2:  \V(r)   >t \rho_0(r) \big\} \, .
$$
We have

\begin{theorem} \label{thm-counting}
Let $B$ satisfy Assumption \eqref{ass-regular}. Then for any $a>1$ there exists  a constant $c(B,a)$ such that 
\begin{equation}  \label{counting}
N((i\nabla +A)^2 -V) \, \leq\,   c(B,a) \, [ V]_a ^a \\[2pt]
\end{equation}
for all  $V \in L^1(\R_+, L^\infty(\Sph))$ for which the right hand side is finite.  
\end{theorem}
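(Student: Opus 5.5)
The plan is to read the functional \eqref{V-ap} as a weak Lorentz quasinorm and obtain \eqref{counting} by interpolating between two endpoint estimates: the Hardy inequality \eqref{hardy-eq} at one end and the logarithmic bound \eqref{jst-2011} at the other. Put
$$
W(r)=\frac{\V(r)}{\rho_0(r)},\qquad d\mu(r)=\rho_0(r)\,(1+|\log r|)\, r\, dr \quad\text{on }\R_+ .
$$
Then $\Omega_t=\{W>t\}$ and $[V]_a^a=\sup_{t>0}t^a\mu(\{W>t\})$, so $[V]_a=\|W\|_{L^{a,\infty}(\mu)}$.

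Since $V_+\le \V$ pointwise, it suffices to treat radial potentials of the form $W\rho_0$, with $W$ radial. By the Birman--Schwinger principle, $N((i\nabla+A)^2-W\rho_0)=n(1,T_W)$, where
$$
T_W=W^{1/2}\rho_0^{1/2}\,\big((i\nabla+A)^2\big)^{-1/2}
$$
is understood via the quadratic form $Q_A$ on $d_0(Q_A)$, and $n(s,T)$ counts the singular values of $T$ exceeding $s$. We need two endpoint estimates for the map $W\mapsto T_W$.
\begin{enumerate}
\item $L^\infty$ endpoint. By Theorem \ref{thm-hardy},
$$
\|T_W\|^2\le C_B^{-1}\|W\|_\infty ,
$$
so $n(s,T_W)=0$ whenever $s^2>C_B^{-1}\|W\|_\infty$.
\item $L^1$ endpoint. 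Bound \eqref{jst-2011}, applied to the potential $s^{-2}W\rho_0$ (its hypotheses on $B$ should be checked against \eqref{ass-regular}, possibly using the gauge of Lemma \ref{lem-A-regular}), gives
$$
n(s,T_W)\le C\,s^{-2}\int_0^\infty W\rho_0\,(1+|\log r|)\,r\,dr=C\,s^{-2}\|W\|_{L^1(\mu)} .
$$
\end{enumerate}

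The combination step is the standard Birman--Solomyak cut-off argument. Fix $t>0$ and split $W=W\mathbf 1_{\{W\le t\}}+W\mathbf 1_{\{W>t\}}$. For quadratic forms, $\big((i\nabla+A)^2\big)-W\rho_0$ is bounded below by the sum of the forms $\tfrac12 Q_A-W\mathbf 1_{\{W\le t\}}\rho_0$ and $\tfrac12 Q_A-W\mathbf 1_{\{W>t\}}\rho_0$. Glazman's lemma then gives
$$
N\big((i\nabla+A)^2-W\rho_0\big)\le N\big(\tfrac12 Q_A - W\mathbf 1_{\{W\le t\}}\rho_0\big)+N\big(\tfrac12 Q_A - W\mathbf 1_{\{W>t\}}\rho_0\big).
$$
With $t=C_B/2$, the first term vanishes by \eqref{hardy-eq}.

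The second term has to be handled with more care: \eqref{jst-2011} controls it only by $\|W\mathbf 1_{\{W>t\}}\|_{L^1(\mu)}$, and this is not bounded by $[V]_a$ because $W$ may be large on $\Omega_t$. The remedy is a layer-cake decomposition. Split $W=\sum_{k\ge0}W_k$ with $W_k=W\mathbf 1_{\{2^{k}t<W\le 2^{k+1}t\}}$ and distribute the kinetic energy with weights $\epsilon_k$, $\sum_k\epsilon_k\le\frac12$, for instance $\epsilon_k\asymp 2^{-k\beta}$ with $0<\beta<a-1$. On the support of $W_k$ we have $W_k\rho_0\le 2^{k+1}t\rho_0$. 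If the Hardy/$L^\infty$ endpoint alone sufficed there, the layer would not be counted at all; in general, on each layer we apply \eqref{jst-2011}, and the monotonicity in $V$ of the counting function lets us replace $W_k$ by $2^{k+1}t\,\mathbf 1_{\{W>2^kt\}}$. This gives
$$
N\big(\epsilon_k Q_A-W_k\rho_0\big)\lesssim \epsilon_k^{-1}2^{k}t\,\mu(\{W>2^kt\})\lesssim \epsilon_k^{-1}2^{k(1-a)}t^{1-a}[V]_a^a .
$$
The series converges precisely because $a>1$ (with the choice of $\beta$ above), and the sum is $\lesssim[V]_a^a$ with a constant $c(B,a)$ once $t=C_B/2$ is fixed.

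The main obstacle is the interplay between the two endpoints. The Hardy endpoint must be used to discard the region where $W$ is small, while the logarithmic weight $1+|\log r|$ in \eqref{jst-2011} must match exactly the measure $\mu$ entering $[V]_a$. We also need to ensure that the hypotheses under which \eqref{jst-2011} was proved in \cite{k-jst-2011} hold for every regular $B$. If they do not, the fallback is to reprove the $L^1(\mu)$ endpoint directly: via the diamagnetic inequality \eqref{kato}, reduce to a radial one-dimensional problem and count the negative eigenvalues in each angular channel using the logarithmic Hardy bound established in the proof of Theorem \ref{thm-hardy}.
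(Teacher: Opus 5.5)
Your argument is correct, but it takes a more roundabout route than the paper. The paper never cuts $W=\V/\rho_0$ with indicator functions. It writes $\V=s\rho_0+\rho_0(W-s)$ with $s=\delta C_B$, absorbs $s\rho_0$ into the fraction $\delta$ of the kinetic energy via \eqref{hardy-eq}, and applies \eqref{jst-2011} once to $(1-\delta)^{-1}\rho_0(W-s)_+$. It then evaluates $\int(W-s)_+\,d\mu=\int_s^\infty\mu(\{W>\tau\})\,d\tau\le\frac{s^{1-a}}{a-1}[V]_a^a$ by the layer-cake formula.

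Your dyadic layers with kinetic-energy weights $\epsilon_k\asymp 2^{-k\beta}$ are a discretized version of that same layer-cake integral, and they do close the proof:
\begin{itemize}
\item the low part is killed by \eqref{hardy-eq} with $t=C_B/2$;
\item Glazman's lemma gives subadditivity of $N$ (all but finitely many layers have $N=0$, since your bounds tend to zero);
\item $\sum_k\epsilon_k^{-1}2^{k(1-a)}<\infty$ for $\beta<a-1$.
\end{itemize}

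The reason you give for needing the layers is, however, wrong. Since $a>1$,
\[
\|W\mathbf 1_{\{W>t\}}\|_{L^1(\mu)}=t\,\mu(\{W>t\})+\int_t^\infty\mu(\{W>\tau\})\,d\tau\le\frac{a}{a-1}\,t^{1-a}\,[V]_a^a ,
\]
so your initial two-term split already finishes the argument after a single application of \eqref{jst-2011} to $2W\mathbf 1_{\{W>t\}}\rho_0$.

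The dyadic scheme therefore buys nothing here and costs a worse constant. Even with the optimal weights $\epsilon_k\propto 2^{-k(a-1)/2}$, it degenerates like $(a-1)^{-2}$ as $a\downarrow 1$. The paper's shift, or the direct split, gives $(a-1)^{-1}$.

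Likewise, the Birman--Schwinger reformulation via $T_W$ is never actually used, since your combination step runs entirely on quadratic forms; it can be dropped.

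As for your worry about the hypotheses of \eqref{jst-2011}: the paper invokes that bound directly under \eqref{ass-regular} without further verification. Your proof relies on it to exactly the same extent.
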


\begin{proof} 
The proof is a straightforward modification of \cite[Theorem~6.1]{BK} which uses the method of \cite{FLR}, see also \cite{BS}. 
Let $C_B$ be the constant in \eqref{hardy-eq} and let $0<\delta<1$. We set $s= \delta\, C_B$ and using \eqref{hardy-eq} we 
estimate $(i\nabla +A)^2 -\V$
in the sense of quadratic forms in the following way:
\begin{align} 
(i\nabla +A)^2-  \V &= \delta ((i\nabla +A)^2-  \rho_0\,  C_B ) +(1-\delta) \big( (i\nabla +A)^2-  (1-\delta)^{-1} \rho_0 (\rho_0^{-1} \V-s)\big) \nonumber \\[4pt]
 & \geq (1-\delta) \big( (i\nabla +A)^2-  (1-\delta)^{-1} \rho_0 (\rho_0^{-1} \V-s)_+\big) . \label{eta-1}
\end{align}
Hence
$$
N((i\nabla +A)^2 -V) \leq N((i\nabla +A)^2 -\V)\, \leq \, N((i\nabla +A)^2  -  (1-\delta)^{-1} \rho_0 (\rho_0^{-1} \V-s)_+).
$$
Applying inequality \eqref{jst-2011} to the last term we obtain
$$
N((i\nabla +A)^2 -V) \, \leq\,  C (1-\delta)^{-1}  \int_0^\infty \rho_0(r) \big(\rho_0(r)^{-1} \V(r)-s\big)_+\,  (1+|\log r|)\, r \, dr.
$$
The layer cake representation, \cite[Theorem~1.13]{lieb-loss}, gives
\begin{equation} 
 \big(\rho_0(r)^{-1} \V(r)-s\big)_+ =  \int_{\{ \V(r)>\rho_0(r)(s+z)\}}\, dz\, .
\end{equation}
Hence
\begin{align*}
\int_0^\infty\!\! \rho_0(r)\, (\rho_0(r)^{-1} \V(r)-s\big)_+  (1+ |\log r|)\, r \, dr & = \int_0^\infty\!\! \int_{\{ \V(r)>\rho_0(r)(s+z)\}} \rho_0(r)\, (1+ |\log r|)\, r \, dr dz\\[5pt]
&\leq  \int_0^\infty  \sup_{t>0} \Big( t^a\!\! \int_{\Omega_t} \rho_0\, (1+|\log r|)\, r dr \Big)\, \frac{dz}{(s+z)^{a}} \\[5pt]
&= \frac{s^{1-a}}{a-1} \ [ V]_a ^{a}\, .
\end{align*}
This completes the proof of \eqref{counting}.
\end{proof}

\begin{remark} \label{rem-singular}
A quick calculation shows that $[V_\sigma]_a$ is finite if and only if $\sigma\leq a$. Moreover,  $[\lambda V]_a = \lambda\,  [V]_a$. Hence 
upon setting $a= \sigma$ in Theorem \ref{thm-counting} we obtain 
$$
N((i\nabla +A)^2 -\lambda  V_\sigma) \, \leq\,  \lambda^\sigma \, c(B,\sigma) \, [ V_\sigma]_\sigma^{\sigma}\, . 
$$
Notice that this bound captures the correct behavior of $N((i\nabla +A)^2 -\lambda V_\sigma)$ in the strong coupling limit, cf.~equation \eqref{strong}.
\end{remark}


\bibliographystyle{amsalpha}

\end{document}